\documentclass[reqno]{amsart}

\makeatletter
\@namedef{subjclassname@2020}{%
  \textup{2020} Mathematics Subject Classification}
\makeatother

\usepackage{amsmath}
\usepackage{amssymb}
\usepackage{enumerate}

\newtheorem{Theorem}{Theorem}[section]

\newtheorem{Proposition}[Theorem]{Proposition}

\newcommand{\thref}[1]{Theorem \ref{#1}}

\newcommand{\reref}[1]{Remark \ref{#1}}
\newcommand{\seref}[1]{Section \ref{#1}}
\newcommand{\prref}[1]{Proposition \ref{#1}}

\theoremstyle{definition}
\newtheorem{Definition}[Theorem]{Definition}
\newtheorem{Remark}[Theorem]{Remark}
\newtheorem{Example}[Theorem]{Example}
\newtheorem*{Remark*}{Remark}

\numberwithin{equation}{section}

\begin{document}

\newcommand{\pd}{\partial}
\newcommand{\res}{\mathrm{res}}
\newcommand{\alg}{\mathrm{alg}}
\newcommand{\Gr}{\mathrm{Gr}}
\newcommand{\Grad}{\mathrm{Gr^{ad}}}
\newcommand{\Ai}{\mathrm{Ai}}
\newcommand{\Span}{\mathrm{span}}
\newcommand{\clspan}{\overline{\mathrm{span}}}
\newcommand{\Id}{\mathrm{Id}}
\newcommand{\diag}{\mathrm{diag}}
\newcommand{\Ad}{\mathrm{Ad}}
\newcommand{\type}{\mathrm{type}}

\newcommand{\Pset}{\mathbb{P}}
\newcommand{\Qset}{\mathbb{Q}}
\newcommand{\Rset}{\mathbb{R}}
\newcommand{\Tset}{\mathbb{T}}

\newcommand{\Cset}{\mathbb{C}}
\newcommand{\Nset}{\mathbb{N}}
\newcommand{\Zset}{\mathbb{Z}}

\newcommand{\tT}{\mathrm{T}}

\newcommand{\fA}{\mathfrak{A}}

\newcommand{\fE}{\mathfrak{E}}
\newcommand{\fF}{\mathfrak{F}}

\newcommand{\fG}{\mathfrak{G}}
\newcommand{\fH}{\mathfrak{H}}
\newcommand{\fHo}{\mathfrak{H}^{(1)}}
\newcommand{\fHt}{\mathfrak{H}^{(2)}}

\newcommand{\fFD}{\mathfrak{F}^{D}}
\newcommand{\fFK}{\mathfrak{F}^{K}}
\newcommand{\fK}{\mathcal{K}}
\newcommand{\fR}{\mathfrak{R}}
\newcommand{\fS}{\mathfrak{S}}

\newcommand{\fa}{\mathfrak{a}}
\newcommand{\fb}{\mathfrak{b}}
\newcommand{\fp}{\mathfrak{p}}
\newcommand{\fr}{\mathfrak{r}}

\newcommand{\fe}{\mathfrak{e}}
\newcommand{\fh}{\mathfrak{h}}
\newcommand{\ff}{\mathfrak{f}}
\newcommand{\fho}{\mathfrak{h}^{(1)}}
\newcommand{\fht}{\mathfrak{h}^{(2)}}

\newcommand{\cC}{\mathcal{C}}
\newcommand{\cE}{\mathcal{E}}
\newcommand{\cF}{\mathcal{F}}
\newcommand{\cG}{\mathcal{G}}
\newcommand{\cH}{\mathcal{H}}
\newcommand{\cK}{\mathcal{K}}
\newcommand{\cL}{\mathcal{L}}
\newcommand{\cM}{\mathcal{M}}
\newcommand{\cP}{\mathcal{P}}
\newcommand{\cQ}{\mathcal{Q}}
\newcommand{\cR}{\mathcal{R}}
\newcommand{\cS}{\mathcal{S}}
\newcommand{\cU}{\mathcal{U}}
\newcommand{\cV}{\mathcal{V}}
\newcommand{\cLD}{\mathcal{L}^{D}}
\newcommand{\cLK}{\mathcal{L}^{K}}

\newcommand{\cGD}{\mathcal{G}^{D}}
\newcommand{\cGK}{\mathcal{G}^{K}}

\newcommand{\sH}{\mathsf{H}}
\newcommand{\sK}{\mathsf{K}}
\newcommand{\sM}{\mathsf{M}}

\newcommand{\al}{\alpha}
\newcommand{\ka}{\kappa}
\newcommand{\la}{\lambda}
\newcommand{\om}{\omega}
\newcommand{\Ga}{\Gamma}

\newcommand{\Pt}{\tilde{P}}
\newcommand{\tS}{\tilde{S}}
\newcommand{\Vt}{\tilde{V}}
\newcommand{\pt}{\tilde{p}}
\newcommand{\xt}{\tilde{x}}
\newcommand{\et}{\tilde{e}}
\newcommand{\mt}{\tilde{m}}
\newcommand{\kt}{\tilde{k}}
\newcommand{\nt}{\tilde{n}}
\newcommand{\wt}{\tilde{w}}
\newcommand{\Gt}{\tilde{G}}
\newcommand{\Ht}{\tilde{H}}
\newcommand{\phit}{\tilde{\phi}}
\newcommand{\thetat}{\tilde{\theta}}

\newcommand{\Lt}{\tilde{L}}

\newcommand{\bka}{\boldsymbol{\kappa}}
\newcommand{\bla}{\boldsymbol{\lambda}}
\newcommand{\bmu}{\boldsymbol{\mu}}
\newcommand{\bnu}{\boldsymbol{\nu}}
\newcommand{\be}{\boldsymbol{e}}
\newcommand{\bp}{\boldsymbol{p}}

\newcommand{\bv}{\boldsymbol{v}}
\newcommand{\bw}{\boldsymbol{w}}
\newcommand{\bx}{\boldsymbol{x}}
\newcommand{\by}{\boldsymbol{y}}
\newcommand{\bz}{\boldsymbol{z}}
\newcommand{\bal}{\boldsymbol{\alpha}}
\newcommand{\bW}{\boldsymbol{W}}

\newcommand{\tte}{\mathtt{e}}

\newcommand{\rM}{\mathrm{M}}

\newcommand{\Vac}{v}

\newcommand{\cVdn}[3]{{\mathcal{V}}^{#1}_{#2}(#3)}

\newcommand{\tFt}[5]{{}_3F_2\left(\begin{matrix} #1 , #2, #3 \\
#4, #5 \end{matrix}\,; 1\right)}
\newcommand{\tFo}[4]{{}_2F_1\left(\begin{matrix} #1 , #2 \\
#3\end{matrix}\,; #4 \right)}

\title{Gaudin models for some classical multivariate distributions}

\author[P.~Iliev]{Plamen~Iliev}
\address{School of Mathematics, Georgia Institute of Technology, 
Atlanta, GA 30332--0160, USA}
\email{iliev@math.gatech.edu}
\thanks{The author gratefully acknowledges support from the Simons Foundation, and the hospitality and support of the Institut des Hautes \'Etudes Scientifiques (IHES) during a research visit in 2026.} 

\date{August 19, 2026}

\subjclass[2020]{82B23, 81R12, 17B10, 42C05, 33C70}

\keywords{Gaudin model, quantum integrable systems, multivariate orthogonal polynomials, Bethe ansatz}

\begin{abstract} 
Representations of the Kohno--Drinfeld Lie algebra associated with several classical multivariate distributions have played an important role in recent developments in the theory of quantum superintegrable systems.  In this work, we analyze the corresponding Gaudin models for the multivariate Hahn, Dirichlet, multinomial, and negative multinomial distributions. More precisely, using tools from representation theory, we construct multivariate orthogonal polynomials with respect to these distributions as common eigenfunctions of Gaudin operators. The polynomials are parametrized by solutions to the Bethe ansatz equations or, equivalently, by the roots of Heine--Stieltjes polynomials.
\end{abstract}

\maketitle

\section{Introduction} \label{se1}
Recently, there have been numerous developments exploring new relations between quantum integrable systems and multivariate extensions of classical orthogonal polynomials \cite{DGVV,DIV,I4,IX3,KMP2,MPW}. In particular, the analysis of the generic superintegrable system on the sphere in \cite{I3} revealed the role played by a representation of the Kohno--Drinfeld Lie algebra on the spaces of orthogonal polynomials associated with the Dirichlet distribution. 
These representations were extended to several discrete multivariate distributions in \cite{IX3}, which related earlier works \cite{GI,KM,Mi,Tr,Tr2} on multivariate analogs of the classical orthogonal polynomials and their spectral properties to maximal abelian subalgebras of the Kohno--Drinfeld Lie algebra generated by Jucys--Murphy elements. 

At the same time, there have been remarkable advances in the theory of Gaudin models, with far-reaching applications across different branches of mathematics and theoretical physics; see, for instance, \cite{FFR,Fr,MTV1,MTV2} and the references therein. A natural question is whether we can also link this theory to multivariate classical distributions by exploring the Kohno--Drinfeld structure mentioned above. The first step in this direction was the diagonalization of the Gaudin operators for the multinomial and the negative multinomial distributions in \cite{I5}. The approach developed in that work used an appropriate ansatz on the variety that parametrizes the multivariate Krawtchouk and Meixner polynomials defined in \cite{Gr1,Gr2,MT} and their spectral properties derived in \cite{I1,I2}, but an explicit connection to Gaudin's original work \cite{Gau1} remained elusive.

The goal of the present paper is to provide a direct link between the theory of Gaudin models and multivariate extensions of the classical orthogonal polynomials for classical distributions, which include the multivariate Hahn, Dirichlet, multinomial, and the negative multinomial distributions. In particular, we construct new families of multivariate Hahn and Jacobi polynomials as common eigenfunctions of commuting partial difference and differential operators, respectively, parametrized by solutions to the Bethe ansatz equations in Gaudin's work \cite{Gau1} or, equivalently, by the roots of the Heine–Stieltjes polynomials \cite{Heine,Stieltjes}. We also give a new Lie-algebraic framework and an alternative description of the multivariate Krawtchouk and Meixner polynomials introduced in \cite{I5}. 

The paper is organized as follows. In the next section, we briefly review the Kohno--Drinfeld Lie algebra and its connection to multivariate orthogonal polynomials and quantum integrability. In \seref{se3} we collect key properties of a Gaudin model associated with $\mathfrak{gl}_{2}$ needed for the multivariate Hahn distribution. We use the well-known ideas and techniques to build the Gaudin eigenvectors \cite{FFR,Fr,Gau1}, but we adapt the construction of the modules so that the statements suit our purposes. We also outline the classical Heine--Stieltjes problem and its connections to the simplicity of the spectrum of the Gaudin model needed in the following section. In \seref{se4}, we discuss the diagonalization of the Gaudin operators for the multivariate Hahn distribution and derive formulas for the corresponding polynomials. In \seref{se5}, we obtain families of multivariate Jacobi polynomials that diagonalize Gaudin operators for the Dirichlet distribution by an appropriate limit from the Hahn case. In \seref{se6}, we discuss the multinomial distribution in detail. In the first subsection, we briefly explain the underlying algebraic framework which simplifies in this case and the (complicated) Bethe ansatz equations reduce to the significantly simpler problem of finding the roots of a certain polynomial. Next, we illustrate how this approach leads to a new construction of the multivariate Krawtchouk polynomials defined in \cite{I5}, while also providing a direct proof of the Aomoto--Gelfand hypergeometric representation. The modifications needed for the negative multinomial distribution and the corresponding multivariate Meixner polynomials are discussed in \reref{re6.5}.

\subsection*{Notations}
Throughout the paper we will use the following notations:
\begin{itemize}
\item $\Rset$ and $\Cset$ will denote the fields of real and complex numbers, respectively. 
\item $\Nset_0$ will denote the set of all nonnegative integers.
\item For a vector $\bv=(v_1,\dots,v_k)$ we set $|\bv|=v_1+\cdots+v_k$.
\item For $a\in\Cset$ and  $k\in\Nset_0$ we will denote by $(a)_k $  the
Pochhammer symbol $(a)_k = a(a+1)\cdots(a+k-1)$.
\item For $\bx=(x_1,\dots,x_d)$ we will denote by $\Rset[x_1,\dots,x_d]$ or simply by $\Rset[\bx]$ the ring of polynomial in $x_1,\dots,x_d$ with real coefficients.
\item For $\bx=(x_1,\dots,x_d)$ and $k\in\Nset_0$ we will denote by $\Rset_k[\bx]$ the space of polynomial in $x_1,\dots,x_d$ with real coefficients of total degree at most $k$.
\item For $N\in\Nset $ we denote by $V_N^d$ the discrete simplex
\begin{equation*}
V_N^d= \{\bx \in \Nset_0^d: |\bx|=x_1+\cdots+x_{d} \le N\} \text{ in }\Rset^{d}.
\end{equation*}
\item $\cM_{n\times d}$ is the set of all $n\times d$ matrices $A=(a_{j,k})$ such that $a_{j,k}\in\{0,1\}$ and $A$ has at most one nonzero entry in each row. 
\item For $N\in\Nset $ we denote by ${\rM}_{d\times d}(N)$ the set of all $d\times d$ matrices $A=(a_{i,j})$ with nonnegative integer entries $a_{i,j}$, such that $\sum_{i,j=1}^{d}a_{i,j}\leq N$.
\end{itemize}

\section{Kohno--Drinfeld Lie algebra and multivariate orthogonal polynomials} \label{se2} 

\subsection{Kohno--Drinfeld Lie algebra}
Recall that the Kohno--Drinfeld Lie algebra \cite{Dr,Ko} is the Lie algebra $\mathfrak{D}_{d+1}$ with generators $\cL_{i,j}=\cL_{j,i}$, $i\neq j\in\{0,1,\dots,d\}$ and relations
\begin{subequations}\label{2.1}
\begin{align}
[\cL_{i,j},\cL_{k,l}]&=0, &&\text{ if }i,j,k,l \text{ are distinct,} \label{2.1a}\\
[\cL_{i,j},\cL_{i,k}+\cL_{j,k}]&=0, &&\text{ if }i,j,k \text{ are distinct.}   \label{2.1b}
\end{align}
\end{subequations}

Two maximal abelian subalgebras have been extensively studied in the literature which we briefly discuss below.

\subsubsection{Jucys--Murphy subalgebras} Using the equations in \eqref{2.1} it is easy to see that the {\em Jucys--Murphy elements}
\begin{equation}\label{2.2}
\cL_{0,1}, \cL_{0,2}+\cL_{1,2}, \cL_{0,3}+\cL_{1,3}+\cL_{2,3},\dots, \sum_{j=0}^{d-1}\cL_{j,d},
\end{equation}
are pairwise commuting and thus generate an abelian subalgebra.

\subsubsection{Gaudin subalgebras} Fix distinct complex numbers $\al_0,\dots,\al_{d}$ and let $\bal=(\al_0,\dots,\al_{d})$. For $i\in\{0,1,\dots,d\}$ we define {\em Gaudin elements} by
\begin{equation}\label{2.3}
\fG_i(\bal)=\sum_{\begin{subarray}{c}j=0\\ j\neq i \end{subarray}}^{d}\frac{\cL_{i,j}}{\al_{i}-\al_{j}}.
\end{equation}
It is straightforward to check that the elements $\fG_i(\bal)$ pairwise commute and thus generate an abelian subalgebra of $\mathfrak{D}_{d+1}$.

\begin{Remark}\label{re2.1}
From the Kohno--Drinfeld relations \eqref{2.1} it follows that the element 
\begin{equation}\label{2.4}
\cL=\sum_{0\leq i< j\leq d} \cL_{i,j}
\end{equation}
is central and belongs to all maximal abelian subalgebras of $\mathfrak{D}_{d+1}$. If we think of $\cL$ as a quantum Hamiltonian, then the operators $\cL_{i,j}$  represent integrals of motion for $\cL$. Clearly, we can get $\cL$ by adding all elements in \eqref{2.2} in the Jucys--Murphy case. For the Gaudin subalgebras we can obtain $\cL$ using the identity
\begin{equation}\label{2.5}
\cL=\sum_{i=1}^{d} (\al_i-\al_0) \fG_{i}(\bal).
\end{equation}
\end{Remark}

\subsection{Multivariate orthogonal polynomials}
Suppose that we have an inner product defined on the space $\Rset[\bx]=\Rset[x_1,\dots,x_d]$ or $\Rset_{N}[\bx]$ , where $\Rset_k[\bx]$ denotes the space of polynomial of total degree at most $k$. If $d>1$, there is no canonical way to define orthogonal polynomials, and therefore it is more natural to develop the general theory by working with the spaces ${\mathcal{V}}^{d}_{n}=\Rset_{n}[\bx]\ominus \Rset_{n-1}[\bx]$ of orthogonal polynomials of total degree $n$, where $n\in\Nset_0$ or $n\in\{0,\dots, N\}$, see \cite{DX}. For the multivariate Hahn, the multinomial and the negative  multinomial distributions, one can define partial difference operators $\cL_{i,j}$ which satisfy the Kohno--Drinfeld relations \eqref{2.1} and preserve the spaces ${\mathcal{V}}^{d}_{n}$, see \cite{IX3}. Furthermore, the spaces ${\mathcal{V}}_{n}^{d}$ can be characterized by a spectral equation of the form 
\begin{equation}
\cL(P(\bx))=\la_{n} P(\bx)\quad \text{ for every }\quad P(\bx)\in {\mathcal{V}}_{n}^{d},
\end{equation}
where $\cL$ is the operator in \eqref{2.4} and appropriate eigenvalues $\la_n$. This equation characterizes the distributions mentioned above and plays a central role in the classification of multivariate extensions of the discrete classical orthogonal polynomials \cite{IX1}. For the Dirichlet distribution similar properties hold for appropriate differential operators $\cL_{i,j}$ which can be obtained by a limit from the Hahn distribution, and the operator $\cL$ can be transformed into the Hamiltonian for the generic quantum superintegrable system on the sphere. In view of this, the operator $\cL$ for the multivariate Hahn distribution can be regarded as a discrete extension of the generic quantum superintegrable system on the sphere, while an appropriate limit from the operator $\cL$ for the multinomial distribution leads to the quantum harmonic oscillator \cite{IX3}.

\subsubsection{Diagonalization of the Jucys--Murphy elements} Multivariate orthogonal polynomials $P_{\bnu}(\bx)$ for the multivariate Hahn, Dirichlet and the multinomial distribution which are common eigenfunctions of the Jucys--Murphy elements in \eqref{2.2} 
have a long history, starting with the papers \cite{MP,Prorial} in dimension two for the Dirichlet distribution which contain formulas for bivariate Jacobi polynomials defined as a product of hypergeometric functions of a single variable. Similar formulas, providing orthogonal bases given explicitly as products of hypergeometric functions in arbitrary dimension, for each of the distributions above have appeared in numerous applications in probability and mathematical physics \cite{KMT,KM,Mi}. In \cite{GI} it was shown that all these polynomials are bispectral in the sense of \cite{DG}, i.e. they are common eigenfunctions of the Jucys--Murphy operators acting on the variables $\bx=(x_1,\dots,x_d)$, and at same time, they are common eigenfunctions of commuting partial difference operators acting on the degree indices $\bnu=(\nu_1,\dots,\nu_{d})$. This follows from the more general duality and bispectrality established in \cite{GI} for the multivariate Racah and Wilson polynomials defined by Tratnik \cite{Tr,Tr2}. 

\subsubsection{Diagonalization of the Gaudin subalgebras}
For the Gaudin algebras, the only known results so far concern the multinomial and the negative multinomial distributions in \cite{I5}. The approach developed in this work used the multivariate Krawtchouk and Meixner polynomials defined in \cite{Gr1,Gr2,MT}, their spectral properties derived in \cite{I1,I2}, and a suitable ansatz on the variety that parametrizes these polynomials, which in a broad sense is similar to the Bethe ansatz \cite{Bethe}.  These multivariate polynomials are also bispectral as shown in \cite{I5}, thus revealing an interesting duality for the corresponding Gaudin operators. In the last section, we explain how the polynomials defined in \cite{I5} can be obtained by a limit from the new families of Hahn polynomials constructed here, thus providing an explicit link to Gaudin's work \cite{Gau1}.
 The spectral properties of $P_{\bnu}(\bx)$ as functions of $\bnu$ for the Gaudin operators associated with the Hahn and Dirichlet operators amounts to a challenging new problem related to properties of the roots of the Heine--Stieltjes polynomials  \cite{Heine,Stieltjes}.

\section{Gaudin model for the Lie algebra $\mathfrak{gl}_{2}$} \label{se3}

\subsection{Algebraic setting} 
We fix a Lie algebra $\mathfrak{g}$ with basis $\{\fe,\ff,\fho,\fht\}$ and brackets
\begin{subequations}\label{3.1}
\begin{align}
&[\fho,\ff]=\ff, && [\fht,\ff]=\ff, && [\ff,\fe]=\fho+\fht, \label{3.1a}\\
&[\fho,\fe]=-\fe, && [\fht,\fe]=-\fe, && [\fho,\fht]=0. \label{3.1b}
\end{align}
\end{subequations}
In particular, $\fho-\fht$ belongs to the center of $\mathfrak{g}$. 
It is easy to see that $\mathfrak{g}$ is isomorphic to $\mathfrak{gl}_{2}$ if we map:
\begin{align*}
\fe\to \left[\begin{matrix} 0 &1 \\ 0 & 0\end{matrix}\right], \quad \ff\to \left[\begin{matrix} 0 &0 \\ 1 & 0\end{matrix}\right], 
\quad \fho\to \left[\begin{matrix} -1 &0 \\ 0 & 0\end{matrix}\right],  
\quad \fht\to \left[\begin{matrix} 0 &0 \\ 0 & 1\end{matrix}\right].
\end{align*}

Let $U(\mathfrak{g})$ be the universal enveloping algebra of $\mathfrak{g}$ and note that
\begin{equation}\label{3.2}
\ff \fe+\fho\fht-\fho \quad\text{ belongs to the center of } U(\mathfrak{g}).
\end{equation}

For $\la\in\Cset$, let $V_{\la}$ be a 
$\mathfrak{g}$-module generated by the vacuum vector 
$\Vac_{\la}$ 
\begin{equation}\label{3.3}
V_{\la}=U(\mathfrak{g})\cdot\Vac_{\la},
\end{equation}
and the relations
\begin{subequations}\label{3.4}
\begin{align}
&(\fht-\fho)\cdot \Vac_{\la}=\la\, \Vac_{\la}, \label{3.4a} \\
&(\fho-\ff)\cdot\Vac_{\la} =(\fht+\fe)\cdot\Vac_{\la} =0 \label{3.4b}.
\end{align}
\end{subequations}
For the applications in the next section, it will be convenient to work with the module $V_{\la}$ above instead of the finite-dimensional irreducible highest-weight modules parametrized by dominant integral weights for  $\mathfrak{sl}_{2}$ in the usual approach \cite{FFR,Fr}.

We fix a nonnegative integer $d$ and we set $\bla=(\la_0,\la_1,\dots,\la_d)$. We denote by $V_{\bla}$ the tensor product 
$V_{\la_0}\otimes V_{\la_1}\otimes\cdots \otimes V_{\la_d}$. 
For any element $a\in \mathfrak{g}$ and $j\in\{0,\dots,d\}$ we denote by $a_{j}$ the operator 
$$1\otimes \cdots \otimes \underbrace{a}_{j+1}\otimes \cdots \otimes1$$ 
which acts as $a$ on the $(j+1)$-st factor of $V_{\bla}$ and as the identity on all other factors.

Let $\al_0,\al_1,\dots,\al_d$ be distinct complex numbers and let $\bal=(\al_0,\dots,\al_d)$. The {\em Gaudin operators} are defined by 
\begin{equation}\label{3.5}
\fG_i(\bal)=\sum_{\begin{subarray}{c}j=0\\ j\neq i \end{subarray}}^{d}\frac{\fe_{i}\ff_{j}+\ff_{i}\fe_{j}+\fho_{i}\fht_{j} +\fht_{i} \fho_{j}}{\al_{i}-\al_{j}}, \qquad \text{for }\quad i=0,1,\dots,d.
\end{equation}
They pairwise commute and the main problem is to find the eigenvectors and the eigenvalues of these operators. From \eqref{3.4b} it follows that 
\begin{equation*}
\Vac_{\bla}=\Vac_{\la_0}\otimes \cdots \otimes \Vac_{\la_d}
\end{equation*}
is an eigenvector of the Gaudin operators with eigenvalue $0$. The main idea of the Bethe ansatz method \cite{Bethe,Gau1} is to produce new eigenvectors by applying elementary operators to the vacuum vector $\Vac_{\bla}$. To this end, it will be more convenient to work with generating functions \cite{FFR}. For a formal variable $u$ we set
\begin{equation}\label{3.6}
\cS(u)=\sum_{i=0}^{d}\frac{\fG_i(\bal)}{u-\al_i}=\sum_{0\leq i<j\leq d} \frac{\fe_{i}\ff_{j}+\ff_{i}\fe_{j}+\fho_{i}\fht_{j} +\fht_{i} \fho_{j}}{(u-\al_i)(u-\al_j)},
\end{equation}
and
\begin{subequations}\label{3.7}
\begin{align}
& \fF(u) =\sum_{j=0}^{d} \frac{\ff_j}{u-\al_j},&&  \fE(u) =\sum_{j=0}^{d} \frac{\fe_j}{u-\al_j}, \label{3.7a}\\
& \fHo(u) =\sum_{j=0}^{d} \frac{\fho_j}{u-\al_j},&&  \fHt(u) =\sum_{j=0}^{d} \frac{\fht_j}{u-\al_j}. \label{3.7b}
\end{align}
\end{subequations}

The relations \eqref{3.1} can be rewritten in terms of the generating functions as follows

\begin{align*}
& [\fHo(v),\fF(u)] =-\frac{1}{u-v} (\fF(u)-\fF(v)), \quad  [\fHo(v),\fE(u)] =\frac{1}{u-v} (\fE(u)-\fE(v)), \\
& [\fHt(u),\fF(v)] =-\frac{1}{u-v} (\fF(u)-\fF(v)), \quad [\fHt(u),\fE(v)] =\frac{1}{u-v} (\fE(u)-\fE(v)), \\ 
&  [\fHo(u),\fHt(v)] =0, \quad  [\fF(u),\fE(v)] =-\frac{1}{u-v} (\fHo(u)+\fHt(u)-\fHo(v)-\fHt(v)),\\ 
\end{align*}
and 
\begin{equation*}
\cS(u)=\fF(u)\fE(u)+\fHo(u)\fHt(u)+\frac{\pd \fHo(u)}{\pd u}-\sum_{j=0}^{d}\frac{\ff_j\fe_j+\fho_j\fht_j-\fho_j}{(u-\al_j)^2}.
\end{equation*}

From the above equations and \eqref{3.2} it follows that

\begin{equation}\label{3.8}
[\cS(u),\fF(v)]=\frac{1}{u-v} (\fF(v)(\fHo(u)+\fHt(u))-\fF(u)(\fHo(v)+\fHt(v))).
\end{equation}
If we set 
$$\fK(u)=\fHo(u)-\fF(u),\text{ and }\fH(u)=\fHt(u)-\fHo(u),$$
then we can rewrite \eqref{3.8} as 
\begin{equation}\label{3.9}
[\cS(u),\fF(v)]=\frac{1}{u-v} (2\fF(v)\fK(u)-2\fF(u)\fK(v)-\fF(u)\fH(v)+\fF(v)\fH(u)).
\end{equation}
Note that 
\begin{equation}\label{3.10}
[\fH(u),\fF(v)]=0,\qquad \text{ and }\qquad \fH(u)\cdot \Vac_{\bla}=\left(\sum_{j=0}^{d}\frac{\la_j}{u-\al_j}\right) \Vac_{\bla}.
\end{equation}
Since 
\begin{equation}\label{3.11}
\fK(u)\cdot \Vac_{\bla}=0, \qquad \text{ and }\qquad [\fK(u),\fF(v)]=-\frac{1}{u-v} (\fF(u)-\fF(v)),
\end{equation}
one can show by induction that 
\begin{align}
\fK(u)\cdot(\fF(w_n) \cdots \fF(w_{1})\cdot \Vac_{\bla})=\sum_{j=1}^{n}\frac{1}{w_j-u} 
\cdot \fF(w_{n})\cdots  \underbrace{\fF(u)}_{j} \cdots \fF(w_{1})\cdot \Vac_{\bla}& \nonumber\\
+\left(\sum_{j=1}^{n}\frac{1}{u-w_j}\right) \fF(w_n) \cdots \fF(w_{1})\cdot \Vac_{\bla}.&\label{3.12}
\end{align}
Set $\bw=(w_1,\dots,w_n)$ and let
\begin{equation}\label{3.13}
r(u; \bw)=\sum_{j=1}^{n}\frac{1}{u-w_j} 
\qquad \text{ and }\qquad 
\fr(u;\bla,\bal)=\sum_{k=0}^{d}\frac{\la_k}{u-\al_k}.
\end{equation}

Using equations \eqref{3.9}-\eqref{3.12} it follows by induction on $n$ that 
\begin{subequations}\label{3.14}
\begin{align}
&\cS(u)\cdot(\fF(w_n) \cdots \fF(w_{1})\cdot \Vac_{\bla}) = \rho(u;\bw,\bla,\bal) \,\fF(w_n) \cdots \fF(w_{1})\cdot \Vac_{\bla} \nonumber\\
&\qquad\qquad +\sum_{j=1}^{n}\frac{f_j^{(n)}(\bw;\bla;\bal)}{w_j-u}   \fF(w_{n})\cdots  \underbrace{\fF(u)}_{j} \cdots \fF(w_{1})\cdot \Vac_{\bla}, \label{3.14a}
\end{align}
where 
\begin{equation}\label{3.14b}
\rho(u;\bw,\bla,\bal)=r(u; \bw) \fr(u;\bla,\bal) + r^2(u; \bw)+ \frac{d r}{d u }(u;\bw),
\end{equation}
and 
\begin{equation}\label{3.14c}
f_j^{(n)}(\bw;\bla;\bal)=\sum_{\begin{subarray}{c}k=1\\ k\neq j \end{subarray}}^{n}\frac{2}{w_j-w_k}+\fr(w_j;\bla,\bal).
\end{equation}
\end{subequations}
Clearly, if the coefficients $f_j^{(n)}(\bw;\bla;\bal)$ in \eqref{3.14c} all vanish, then $\fF(w_n)\cdot \cdots \fF(w_{1})\cdot \Vac_{\bla} $ is an eigenvector of $\cS(u)$ with eigenvalue $\rho(u;\bw,\bla,\bal)$ given in \eqref{3.14b}. We summarize these results in the proposition below.

\begin{Proposition}\label{pr3.1}
If $w_1,\dots,w_n$ satisfy the Bethe ansatz equations 
\begin{equation}\label{3.15}
\sum_{\begin{subarray}{c}k=1\\ k\neq j \end{subarray}}^{n}\frac{2}{w_j-w_k}+\sum_{\ell=0}^{d}\frac{\la_{\ell}}{w_j-\al_{\ell}}=0 ,\qquad \text{ for }\quad j=1,\dots,n,
\end{equation}
then 
\begin{equation}\label{3.16}
\Vac_{\bla}(\bw)=\fF(w_n)\cdots  \fF(w_{1})\cdot \Vac_{\bla}
\end{equation}
is a common eigenvector of the Gaudin operators \eqref{3.5}. 
Moreover,
\begin{equation}\label{3.17}
\cS(u)\cdot  \Vac_{\bla}(\bw) = \rho(u;\bw,\bla,\bal) \, \Vac_{\bla}(\bw),
\end{equation}
where $\rho(u;\bw,\bla,\bal)$ is given in \eqref{3.14b}.
\end{Proposition}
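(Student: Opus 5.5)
The plan is to read the proposition off the identity \eqref{3.14} and then pass from the generating function $\cS(u)$ to the individual Gaudin operators $\fG_i(\bal)$ by taking residues.

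\textbf{Step 1: establish \eqref{3.14} by induction on $n$.} For $n=0$ the claim is that $\cS(u)\cdot\Vac_{\bla}=0$. This holds because each $\fG_i(\bal)$ kills $\Vac_{\bla}$ by \eqref{3.4b}. For the inductive step write
\begin{equation*}
\cS(u)\fF(w_{n+1})X=\fF(w_{n+1})\cS(u)X+[\cS(u),\fF(w_{n+1})]X,
\end{equation*}
where $X=\fF(w_n)\cdots\fF(w_1)\cdot\Vac_{\bla}$.
\begin{itemize}
\item The first term is handled by the induction hypothesis.
\item For the commutator, use \eqref{3.9}. Evaluate $\fH(u)X$ and $\fH(w_{n+1})X$ by \eqref{3.10}, since $\fH$ commutes with $\fF$ and acts on $\Vac_{\bla}$ by the scalar $\fr$.
\item Evaluate $\fK(u)X$ and $\fK(w_{n+1})X$ by \eqref{3.12}.
\end{itemize}
After collecting terms, the coefficient of the unchanged product becomes the updated $\rho$. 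This relies on the identity
\begin{equation*}
\frac{2}{u-w}\bigl(r(u)-r(w)\bigr)=-\sum_j\frac{2}{(u-w_j)(w-w_j)},
\end{equation*}
which turns the cross terms into $r^2$ and $r'$ pieces. The remaining terms, in which one $\fF(w_j)$ is replaced by $\fF(u)$, acquire the coefficients $f_j^{(n+1)}/(w_j-u)$. The $\fK(w_{n+1})$-term produces $\fF(u)$ in the other slots with factors $2/(w_j-w_{n+1})$; these are the contributions that build up the sum over $k$ in \eqref{3.14c}.

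\textbf{Step 2: deduce \eqref{3.17}.} Under the Bethe ansatz equations \eqref{3.15}, every $f_j^{(n)}$ in \eqref{3.14c} vanishes. So \eqref{3.14a} reduces to \eqref{3.17} as an identity of rational functions in $u$ with vector coefficients.

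\textbf{Step 3: pass to the individual operators.} Since the $\al_i$ are distinct, $\cS(u)=\sum_i\fG_i(\bal)/(u-\al_i)$, and $\fG_i(\bal)$ is the residue of $\cS(u)$ at $u=\al_i$. Taking residues in \eqref{3.17} gives
\begin{equation*}
\fG_i(\bal)\cdot\Vac_{\bla}(\bw)=\res_{u=\al_i}\rho(u;\bw,\bla,\bal)\,\Vac_{\bla}(\bw)=\frac{\la_i}{\dots}\cdots,
\end{equation*}
and explicitly the residue is $\la_i\, r(\al_i;\bw)$. This is legitimate because the $w_j$ must differ from the $\al_\ell$ for \eqref{3.15} to make sense, so $r^2$ and $r'$ are regular at $\al_i$. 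Hence $\Vac_{\bla}(\bw)$ is a common eigenvector of all the $\fG_i(\bal)$.

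\textbf{Main obstacle.} The hard part is the bookkeeping in Step 1. One has to confirm that the terms with $\fF(u)$ in slot $j$ coming from $\fF(w_{n+1})\cS(u)X$ combine with those coming from $\fK(w_{n+1})X$ in \eqref{3.9} to give exactly $f_j^{(n+1)}$. One also has to check that the new slot $j=n+1$ gets coefficient $\bigl(2r(w_{n+1})+\fr(w_{n+1})\bigr)/(w_{n+1}-u)$. I would first carry out the case $n=1$ explicitly as a template.
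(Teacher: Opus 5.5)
Your proposal is correct and follows the paper's own argument: you derive \eqref{3.14} by induction on $n$ from \eqref{3.9}--\eqref{3.12}, note that the $f_j^{(n)}$ vanish under \eqref{3.15}, and take residues at $u=\al_i$ to get \eqref{3.18}. One small bookkeeping correction: the coefficient of the unchanged product updates directly by $\rho_{n+1}(u)-\rho_n(u)=\bigl(2r(u;\bw)+\fr(u;\bla,\bal)\bigr)/(u-w_{n+1})$, exactly what the $\fK(u)$- and $\fH(u)$-terms of \eqref{3.9} supply, while the partial-fraction identity you quote is really needed termwise, as $\frac{1}{u-w_{n+1}}\bigl(\frac{2}{w_j-u}-\frac{2}{w_j-w_{n+1}}\bigr)=\frac{2}{(w_j-u)(w_j-w_{n+1})}$, to merge the $\fK(u)$- and $\fK(w_{n+1})$-contributions (not the $\fK(w_{n+1})$-term alone) into the extra summand $2/(w_j-w_{n+1})$ of $f_j^{(n+1)}$.
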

Equivalently, computing the residue of both sides of \eqref{3.17} at $u=\al_i$ we can rewrite equation \eqref{3.17} as 
\begin{equation}\label{3.18}
\fG_i(\bal)\cdot  \Vac_{\bla}(\bw) = (\la_i r(\al_i; \bw))\, \Vac_{\bla}(\bw), \qquad \text{for }i=0,\dots,d.
\end{equation}

\subsection{Real parameters and solutions, and simplicity of the joint spectrum of Gaudin operators} 
For the application in the next section, we are interested in parameters satisfying the following conditions
\begin{equation}\label{3.19}
\la_j>0 \quad \text{ and }\quad \al_j\in\Rset \quad\text{ for }\quad  j=0,\dots,d.
\end{equation}
In this case, one can show that the solutions of the Bethe ansatz equations \eqref{3.15} are real, and the joint spectrum of Gaudin operators is simple, see \cite{SV} for the $\mathfrak{sl}_2$ case, \cite{MTV2} for far reaching extensions, and \cite{Var} for equations extending \eqref{3.15} corresponding to critical points for product of powers of linear functions. Below, we outline a proof of this fact with the notations above, following \cite{Szego} and providing a link to the classical Heine--Stieltjes problem \cite{Heine, Stieltjes}. 

Since $[\fF(w_i),\fF(w_j)]=0$, the eigenvectors $\Vac_{\bla}(\bw)$ in \eqref{3.16} are invariant under permutations of the parameters $w_1,w_2,\dots,w_n$ and thus we can try to characterize the polynomials 
\begin{equation}\label{3.20}
p(w)=p(w;\bw)=(w-w_1)(w-w_2)\cdots (w-w_{n}),
\end{equation}
whose roots satisfy equations \eqref{3.15}. Note that 
$$\sum_{\begin{subarray}{c}k=1\\ k\neq j \end{subarray}}^{n}\frac{2}{w_j-w_k}=\frac{p''(w_j)}{p'(w_j)},$$
hence if we set 
\begin{equation}\label{3.21}
A(w)=\prod_{j=0}^{d}(w-\al_j), \qquad \text{ and }\qquad B(w)=A(w)\left(\sum_{k=0}^{d}\frac{\la_k}{w-\al_k}\right),
\end{equation}
then $A(w)$ and $B(w)$ are polynomials of degrees $\deg _wA(w)=d+1$, $\deg _wB(w)=d$, and \eqref{3.15} can be rewritten as 
$$A(w_j)p''(w_j)+B(w_j)p'(w_j)=0,\qquad \text{ for }\quad j=1,\dots,n.$$
The last equation is equivalent to the condition that $A(w)p''(w)+B(w)p'(w)$ is divisible by $p(w)$. Thus, \eqref{3.15} holds if and only if there exists a polynomial $C(w)$, of degree $\deg _wC(w)\leq d-1$ such that 
\begin{equation}\label{3.22}
A(w)p''(w)+B(w)p'(w)-C(w)p(w)=0,
\end{equation}
which is the problem studied by Heine~\cite{Heine}, see also \cite[page~151]{Szego}. Equation \eqref{3.15} is known as the Stieltjes electrostatic interpretation.

If we fix $A(w)$ and $B(w)$ as in \eqref{3.21} with parameters satisfying \eqref{3.19} and if $C(w)$ is a polynomial, one can show that
\begin{enumerate}[(1)]
\item Equation~\eqref{3.22} can have at most one monic polynomial solution $p(w)$, and\label{fact1}
\item If $\sigma\in S_{d+1}$ is a permutation such that $\al_{\sigma(0)}<\al_{\sigma(1)}<\cdots <\al_{\sigma(d)}$, then the roots $\{w_j\}$ of a polynomial $p(w)$ satisfying \eqref{3.22} are real and are  distributed in the open intervals $(\al_{\sigma(0)}, \al_{\sigma(1)})$, $(\al_{\sigma(1)}, \al_{\sigma(2)})$, \dots, $(\al_{\sigma(d-1)}, \al_{\sigma(d)})$,
\end{enumerate}
see \cite[pages 152--153]{Szego}. 
Furthermore, one can show that if $\bnu=(\nu_1,\dots,\nu_d)$ is a weak composition of $n$ into $d$ parts, i.e.
if $\nu_1,\dots,\nu_d$ are nonnegative integers such that 
\begin{equation*}
\nu_1+\cdots+\nu_d=n,
\end{equation*}
then there exists a unique monic polynomial solution $p(w)=p_{\bnu}(w)$ of \eqref{3.22} having exactly $\nu_k$ roots in the $k$-th interval $(\al_{\sigma(k-1)}, \al_{\sigma(k)})$, for every $k=1,\dots,d$,
see \cite[pages 153--155]{Szego}. 

Finally, note that if we consider the partial fraction decomposition of $C(w)/A(w)$:
\begin{equation*}
\frac{C(w)}{A(w)}=\sum_{k=0}^{d}\frac{\mu_k}{w-\al_k},
\end{equation*}
and if we use \eqref{3.21}, we can rewrite \eqref{3.22} as follows
\begin{equation}\label{3.23}
p''(w)+\left(\sum_{k=0}^{d}\frac{\la_k}{w-\al_k}\right)p'(w)-\left(\sum_{k=0}^{d}\frac{\mu_k}{w-\al_k}\right)p(w)=0.
\end{equation}
If $p_{\bnu}(w)$ is the unique polynomial solution having exactly $\nu_k$ roots in the $k$-th interval $(\al_{\sigma(k-1)}, \al_{\sigma(k)})$, then the residue at $w=\al_i$ of the equation above gives
$$\mu_i=\la_i\frac{p'(\al_i)}{p(\al_i)}=\la_i r(\al_i; \bw),$$
using the notation in \eqref{3.13}.
This combined with \eqref{3.18} shows that $\mu_i=\mu_i(\bnu)$ is the eigenvalue of the Gaudin operator $\fG_i(\bal)$. Thus, the differential equation \eqref{3.22} is completely determined by the eigenvalues of the Gaudin operators, and therefore the uniqueness of the polynomial solution stated in (\ref{fact1}) above proves the simplicity of the joint spectrum.  Dividing \eqref{3.23} by $w^{n-1}$ and computing the limit $w\to\infty$ we see that $\mu_0+\mu_1+\cdots+\mu_d=0$ (which also follows from the fact that $\sum_{i=0}^{d}\fG_i(\bal)=0$). In view of this, we will work with $(\mu_1,\dots,\mu_d)$ and we set 
$$\bmu(\bnu)=(\mu_1,\dots,\mu_d).$$
If we replace $\mu_0$ by $-(\mu_1+\cdots+\mu_d)$ in \eqref{3.23}, we can rewrite the differential equation for $p(w)$ as
\begin{equation}\label{3.24}
p''(w)+\left(\sum_{k=0}^{d}\frac{\la_k}{w-\al_k}\right)p'(w)-\left(\sum_{k=1}^{d}\frac{\mu_k (\al_k-\al_0)}{(w-\al_k)(w-\al_0)}\right)p(w)=0.
\end{equation}
Dividing the last equation by $w^{n-2}$ and computing the limit $w\to\infty$ we find
\begin{equation}\label{3.25}
\sum_{k=1}^{d}(\al_k-\al_0)\mu_k=n\left(n+|\bla|-1\right).
\end{equation}

We summarize these statements in the theorem below.
\begin{Theorem}\label{th3.2}
Suppose that $\{\la_j\}_{j=0}^{d}$ are positive numbers and $\{\al_j\}_{j=0}^{d}$ are distinct real parameters. Then for every weak composition $\bnu=(\nu_1,\dots,\nu_d)$ of $n$ into $d$ parts there exists a unique monic polynomial $p_{\bnu}(w)$ of degree $n$, whose zeros $\{w_j\}$ satisfy the Bethe ansatz equations \eqref{3.15} and exactly $\nu_k$ of $\{w_1,\dots,w_n\}$ belong to $k$-th interval $(\al_{\sigma(k-1)}, \al_{\sigma(k)})$, for every $k=1,\dots,d$. The eigenvalues $\mu(\bnu)=(\mu_1,\dots,\mu_d)$ of Gaudin operators $\fG_1(\bal),\dots,\fG_d(\bal)$ corresponding to the eigenvector \eqref{3.16} built from the roots of $p_{\bnu}(w)$ satisfy \eqref{3.25}. 
If $\bnu=(\nu_1,\dots,\nu_d)$ and $\bnu'=(\nu_1',\dots,\nu_d')$ are different weak compositions of $n$ into $d$ parts, 
then $\bmu(\bnu)\neq\bmu(\bnu')$.
 \end{Theorem}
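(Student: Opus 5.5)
The plan is to assemble the theorem from the facts already collected in this section, together with the classical Heine--Stieltjes theory. The argument has three parts: existence and uniqueness of $p_{\bnu}$, the eigenvalue identity \eqref{3.25}, and simplicity of the joint spectrum.

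\emph{Existence.} I would follow Stieltjes' electrostatic argument (as in \cite[pages 153--155]{Szego}). On the product of open intervals, with $\nu_k$ points placed in the $k$-th interval $(\al_{\sigma(k-1)},\al_{\sigma(k)})$, consider the energy
$$E(\bw)=-\sum_{j<k}2\log|w_j-w_k|-\sum_{j=1}^{n}\sum_{\ell=0}^{d}\la_\ell\log|w_j-\al_\ell|,$$
or equivalently maximize $\prod_{j<k}|w_j-w_k|^{2}\prod_{j,\ell}|w_j-\al_\ell|^{\la_\ell}$.
\begin{itemize}
\item Since every $\la_\ell>0$, the functional tends to $+\infty$ when points collide or approach an endpoint $\al_\ell$. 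The intervals are bounded, so a minimum is attained in the interior.
\item At an interior critical point, $\partial E/\partial w_j=0$ is exactly \eqref{3.15}.
\item Hence $p(w)=\prod(w-w_j)$ satisfies \eqref{3.22} with some $C$ of degree $\le d-1$, by the divisibility argument preceding \eqref{3.22}.
\end{itemize}

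\emph{Uniqueness within the class $\bnu$.} This is the step I expect to be the main obstacle. I would try two arguments, either of which should suffice.
\begin{itemize}
\item First, $E$ is strictly convex on the product of intervals, because each term $-\log|x-y|$ and $-\log|w-\al|$ is convex in the real variables on a region where the sign of the difference is fixed. So the critical point in a given cell is unique up to permuting points inside the same interval, and therefore the monic polynomial $p_{\bnu}$ is unique.
\item Alternatively, I would invoke fact (\ref{fact1}) combined with Stieltjes' counting, as in Szeg\H{o}.
\end{itemize}
Reality of the roots and their interlacing with the $\al$'s come from fact (2). Then Proposition~\ref{pr3.1} shows that $\Vac_{\bla}(\bw)$ is a common eigenvector, and \eqref{3.18} identifies its eigenvalues as $\mu_i=\la_i r(\al_i;\bw)$.

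\emph{The identity \eqref{3.25}.} I would compare leading coefficients in \eqref{3.24} for $p=w^n+\cdots$:
\begin{itemize}
\item $p''$ contributes $n(n-1)w^{n-2}$;
\item the $p'$ term contributes $|\bla|\,n\,w^{n-2}$, since $\sum_k \la_k/(w-\al_k)=|\bla|/w+O(w^{-2})$;
\item the last term contributes $\sum_k\mu_k(\al_k-\al_0)\,w^{n-2}$.
\end{itemize}
Vanishing of the total coefficient gives $\sum_k(\al_k-\al_0)\mu_k=n(n-1+|\bla|)$.

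\emph{Simplicity of the spectrum.} Suppose $\bmu(\bnu)=\bmu(\bnu')$. Then $\mu_0$ also agrees, since $\mu_0=-(\mu_1+\cdots+\mu_d)$, so $C(w)=A(w)\sum_k\mu_k/(w-\al_k)$ is the same polynomial for both. Thus $p_{\bnu}$ and $p_{\bnu'}$ are monic polynomial solutions of the same equation \eqref{3.22}. By fact (\ref{fact1}) they coincide, so their roots lie in the same intervals with the same multiplicities, forcing $\bnu=\bnu'$.

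The proof of fact (\ref{fact1}) itself I would take from Szeg\H{o}. Its mechanism is that the Wronskian-type identity $(A\,e^{\int B/A}(p_1'p_2-p_1p_2'))'=0$ with $e^{\int B/A}=\prod|w-\al_k|^{\la_k}$ forces a contradiction at the $\al_k$ when $\la_k>0$, unless $p_1=p_2$.
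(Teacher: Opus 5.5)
Your proposal is correct and follows the paper's route: Heine--Stieltjes theory gives existence and uniqueness of $p_{\bnu}$ (the paper cites Szeg\H{o}, while your energy minimum plus strict convexity on each ordered cell is a valid self-contained substitute), fact (1) gives simplicity, and the $w^{n-2}$ coefficient of \eqref{3.24} gives \eqref{3.25}. Two small fixes: first, make explicit, as the paper does, that the residue of \eqref{3.23} at $w=\al_k$ gives $\mu_k=\la_k p'(\al_k)/p(\al_k)=\la_k r(\al_k;\bw)$, since this is what identifies the partial-fraction coefficients of $C/A$ with the Gaudin eigenvalues in \eqref{3.18}; second, your Wronskian identity should read $\bigl(e^{\int B/A}(p_1'p_2-p_1p_2')\bigr)'=0$, without the extra factor $A$.
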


\section{Multivariate Hahn distribution} \label{se4}
\subsection{Inner product and spaces of orthogonal polynomials}
Suppose that $\ka_0,\ka_1,\dots,\ka_d$ are real numbers such that $\ka_j>-1$ for $j=0,\dots,d$. 
The probability mass function of the multivariate Hahn distribution with parameters $\bka=(\ka_0,\ka_1,\dots,\ka_d)$ and $N\in\Nset$ is
$$\frac{N!}{(|\bka|+ d+1)_N} \prod_{j=0}^{d} \frac{(\ka_j+1)_{x_j}}{x_j!}, \text{ where }x_i\in\Nset_0 \text{ and }x_0+\cdots+x_{d}=N.$$
We set 
\begin{equation}\label{4.1}
x_{0}=N-(x_1+\cdots+x_{d}), 
\end{equation}
and we will work with the independent variables $\bx=(x_1,\dots,x_d)$. This leads to the multivariate Hahn weight
\begin{subequations}\label{4.2}
\begin{equation}\label{4.2a}
\sH_{\bka,N} (\bx) = \frac{N!}{(|\bka|+ d+1)_N}  \frac{(\ka_0+1)_{N-|\bx|}}{(N-|\bx|)!}  \prod_{j=1}^{d} \frac{(\ka_j+1)_{x_j}}{x_j!}, 
\end{equation}
defined on the discrete simplex $V_N^d=\{\bx \in \Nset_0^d: |\bx|=x_1+\cdots+x_{d} \le N\}$.
With the above notation, we consider the space $\Rset_N[\bx]$ of polynomials of total degree at most $N$ equipped with the inner product
\begin{equation}\label{4.2b}
\langle f,g\rangle_{\sH_{\bka,N} }=\sum_{\bx\in V_N^d}f(\bx)g(\bx)\sH_{\bka,N} (\bx),
\end{equation}
and for $n\in\{0,1,\dots,N\}$ we denote by $\cVdn{d}{n}{\sH_{\bka,N}}$ the corresponding space of orthogonal polynomials of degree $n$ defined by 
\end{subequations}
\begin{equation*}
\cVdn{d}{n}{\sH_{\bka,N}}=\Rset_{n}[\bx]\ominus\Rset_{n-1}[\bx],
\end{equation*}
with the convention that $\Rset_{-1}[\bx]=\{0\}$.
Note that 
\begin{equation}\label{4.3}
\dim (\cVdn{d}{n}{\sH_{\bka,N}})=\binom{n+d-1}{n}.
\end{equation}

\subsection{Kohno--Drinfeld structure}\label{ss4.2}
We denote by $\{\be_1,\be_2,\dots,\be_d\}$ the standard basis for $\Rset^d$, and by $E_{x_i}$ and $E_{x_i}^{-1}$ the shift operators 
acting on a function $f(\bx)$ as follows
\begin{align*}
E_{x_i}f(\bx)=f(\bx+\be_i) \quad \text{and}\quad E_{x_i}^{-1}f(\bx)=f(\bx-\be_i).
\end{align*}
Following \cite{IX3}, for $i\neq j\in\{0,\dots,d\}$ we define
\begin{equation}\label{4.4}
\cL_{i,j} =  (x_i+\ka_i+1) x_j (E_{x_i}E_{x_j}^{-1}-\Id)+ (x_j+\ka_j+1) x_i (E_{x_j}E_{x_i}^{-1}-\Id),
\end{equation}
with the convention that $E_{x_{0}}=\Id$ and $x_{0}=N-|\bx|$. One can show that these operators satisfy the Kohno--Drinfeld relations \eqref{2.1}. Moreover, 
\begin{enumerate}
\item these operators are self-adjoint with respect to the inner product \eqref{4.2}, and 
\item they preserve the space $\Rset_n[\bx]$ of polynomials of $x_1,\dots,x_d$ of total degree $n$, i.e. $\cL_{i,j}:\Rset_n[\bx]\to\Rset_n[\bx]$ for every $n\in\{0,\dots,N\}$ and  $i\neq j\in\{0,\dots,d\}$.
\end{enumerate}
Thus, for every $n\in \{0,\dots,N\}$, the operators in \eqref{4.4} define a representation of the Kohno--Drinfeld Lie algebra $\mathfrak{D}_{d+1}$ on the space of orthogonal polynomials $\cVdn{d}{n}{\sH_{\bka,N}}$. 

\begin{Remark}\label{re4.1}
For the representation above, the operator $\cL$ in \eqref{2.4} can be written as follows
\begin{align}
&\cL=\sum_{0\leq i< j\leq d} \cL_{i,j}=\sum_{1\leq i\neq j\leq d}(x_i+\ka_i+1)x_j(E_{x_i}E_{x_j}^{-1}-\Id)\nonumber \\
&\quad+\sum_{i=1}^d(x_i+\ka_i+1)(N-|\bx|)(E_{x_i}-\Id) + \sum_{i=1}^d(N-|\bx|+\ka_{0}+1)x_i(E_{x_i}^{-1}-\Id).\label{4.5}
\end{align}
One can show that for $n=0,1,\dots,N$ we have
\begin{equation}\label{4.6}
\cL(P(\bx))=-n(n+|\bka|+d)P(\bx)\quad \text{ for every }\quad P(\bx)\in \cVdn{d}{n}{\sH_{\bka,N}},
\end{equation}
which characterizes the multivariate Hahn distribution \cite[Section 5]{IX1}.
\end{Remark}

\begin{Remark}\label{re4.2}
By an appropriate limit and gauge transformation, we can transform the operator $\cL$ in \eqref{4.5} into the Hamiltonian for the generic quantum superintegrable system on the sphere, see \reref{re5.1}. Thus $\cL$ can be regarded as a discrete quantum Hamiltonian which extends the generic quantum superintegrable system on the sphere, and $\cL_{i,j}$ represent integrals of motion, or symmetries for $\cL$, see \cite{IX3} for algebraic relations and explicit generators of the symmetry algebra.
\end{Remark}

\subsection{Gaudin algebras}\label{ss4.3}
We fix distinct real numbers $\al_0,\dots,\al_{d}$ and our goal is to find a basis of $\cVdn{d}{n}{\sH_{\bka,N}}$ which diagonalizes the Gaudin operators
\begin{equation}\label{4.7}
\cG_i(\bal)=\sum_{\begin{subarray}{c}j=0\\ j\neq i \end{subarray}}^{d}\frac{\cL_{i,j}}{\al_{i}-\al_{j}}, \qquad i=0,\dots, d.
\end{equation}
In particular, limits and reductions of the operator for $i=0$
\begin{equation}\label{4.8}
\cG_0(\bal)=\sum_{j=1}^{d}\frac{1}{\al_{0}-\al_{j}}\left((x_j+\ka_j+1) (N-|\bx|) (E_{x_j}-\Id)+(N-|\bx|+\ka_0+1) x_j (E_{x_j}^{-1}-\Id)\right), 
\end{equation}
have been studied in the context of birth and death processes, see \reref{re6.6}.

For $i\in \{0,\dots, d\}$, the operators 
\begin{align*}
&\ff_{i}=x_{i}E_{x_i}^{-1}, && \fe_{i}=-(x_{i}+\ka_{i}+1)E_{x_i},\\
&\fho_{i}=x_{i}, && \fht_{i}=x_{i}+\ka_{i}+1,
\end{align*}
define a representation of the Lie algebra $\mathfrak{gl}_2$ with the basis in \eqref{3.1} on the space of polynomials $\Rset[x_i]$ satisfying \eqref{3.4} with $\la_{i}=\ka_{i}+1$ and vacuum vector $\Vac_{\la_i}=1$. Note that with these notations, the operator in \eqref{4.4} can be rewritten as follows
\begin{equation}\label{4.9}
\cL_{i,j} =-(\fe_{i}\ff_{j}+\ff_{i}\fe_{j}+\fho_{i}\fht_{j} +\fht_{i}\fho_{j} ),
\end{equation}
and therefore, up to an overall sign, the Gaudin operators in \eqref{4.7} coincide with the ones defined by \eqref{3.5}. 
Thus, if we consider $x_{0},x_{1},\dots,x_{d}$ as independent variables, we can try to diagonalize the Gaudin operators \eqref{4.7} using \prref{pr3.1} on the space $\Rset[x_{0},x_{1},\dots,x_{d}]$. For the representation above, the operator $\fF(w)$ in \eqref{3.7a} takes the form 
$$\fF(w) =\sum_{j=0}^{d} \frac{1}{w-\al_j}\,x_{j}E_{x_j}^{-1},$$
and we can construct common eigenvectors for the Gaudin operators of the form 
$$\fF(w_k) \cdots \fF(w_{1})\cdot 1,$$
treating $x_{0},x_{1},\dots,x_{d}$ as independent variables, and eliminating $x_0$ at the end, using \eqref{4.1} in the resulting polynomial in $\Rset[x_{0},x_{1},\dots,x_{d}]$, after all operators have been applied. If we want to work within the space $\Rset[x_{1},\dots,x_{d}]$ we can ``restore" the commutativity between the functions and the operators depending on or acting on $x_0$ and the ones depending or acting on $x_{i}$ for $i\in \{1,\dots, d\}$ by treating the polynomials also as functions of $N$, and by replacing $\fF(w)$ above with 
\begin{equation}\label{4.10}
\fF(w) = \frac{1}{w-\al_0}\,(N-|\bx|)E_{N}^{-1}+\sum_{j=1}^{d} \frac{1}{w-\al_j}\,x_{j}E_{x_j}^{-1}E_{N}^{-1}, 
\end{equation}
where $|\bx|=x_{1}+\cdots+x_{d}\in \Rset[x_{1},\dots,x_{d}]$ and  $E_N^{-1}$ denotes the shift operator acting on functions depending on $N$ by $E_N^{-1} f(N) = f(N-1)$. Applying \prref{pr3.1} with the operator $\fF(w)$ in \eqref{4.10}, it follows that if 
$w_1,\dots,w_n$ satisfy the Bethe ansatz equations 
\begin{equation}\label{4.11}
\sum_{\begin{subarray}{c}s=1\\ s\neq j \end{subarray}}^{n}\frac{2}{w_j-w_s}+\sum_{\ell=0}^{d}\frac{\ka_{\ell}+1}{w_j-\al_{\ell}}=0 ,\qquad \text{ for }\quad j=1,\dots,n,
\end{equation}
then the polynomial $\fF(w_n) \cdots \fF(w_{1})\cdot 1$ is a common eigenvector of the Gaudin operators \eqref{4.7}. 
Since the parameters are real and $\la_j=\ka_j+1>0$, we know that every solution $\bw=(w_1,\dots,w_n)$ of \eqref{4.11} has real components, and for every weak composition $\bnu=(\nu_1,\dots,\nu_d)$ of $n$ into $d$ parts we have exactly one solution $\bw(\bnu)=(w_1,\dots,w_n)$, modulo permutations, which has $\nu_k$ components in the $k$-th interval $(\al_{\sigma(k-1)}, \al_{\sigma(k)})$, for every $k=1,\dots,d$. 

\begin{Definition}
For every weak composition $\bnu=(\nu_1,\dots,\nu_d)$ of $n$ into $d$ parts we define
\begin{equation}\label{4.12}
P_{\bnu}(\bx;\bka,N;\bal)=\frac{\prod_{j=1}^{n}(w_{j}^{\bnu}-\al_{0})}{(-N)_{n}}\,\fF(w_n^{\bnu}) \cdots \fF(w_{1}^{\bnu})\cdot 1 \in\Rset[\bx],
\end{equation}
where $\bw(\bnu)=(w_1^{\bnu},\dots,w_n^{\bnu})$ is a solution of the Bethe ansatz equations \eqref{4.11} which has $\nu_k$ components in the $k$-th interval $(\al_{\sigma(k-1)}, \al_{\sigma(k)})$, for every $k=1,\dots,d$, and $\fF$ is the operator defined in \eqref{4.10}.
\end{Definition}
The polynomial $P_{\bnu}(\bx;\bka,N;\bal)$ above is normalized so that it has value $1$ at $\bx=0$. 

Recall that $\cM_{n\times d}$ denotes the set of all $n\times d$ matrices $A=(a_{j,k})$ such that $a_{j,k}\in\{0,1\}$ and $A$ has at most one nonzero entry in each row.  
With the above notations, we can formulate the main result in this section as follows.
\begin{Theorem}\label{th4.4}
If $\bw(\bnu)=(w_1^{\bnu},\dots,w_n^{\bnu})$ is a solution of the Bethe ansatz equations \eqref{4.11} which has $\nu_k$ components in the $k$-th interval $(\al_{\sigma(k-1)}, \al_{\sigma(k)})$, for every $k=1,\dots,d$, and if we set 
\begin{equation}
v_{j,k}^{\bnu}=\frac{\al_k-\al_0}{w_j^{\bnu}-\al_k} \quad \text{ for }\quad j=1,\dots,n,\quad k=1,\dots,d,
\end{equation}
then the polynomial in \eqref{4.12} can be written as
\begin{equation}\label{4.14}
P_{\bnu}(\bx;\bka,N;\bal)=\sum_{A=(a_{j,k})\in\cM_{n\times d}}\frac{\prod_{k=1}^{d}(-x_k)_{\sum_{j=1}^{n}a_{j,k}}}{(-N)_{\sum_{j=1}^{n}\sum_{k=1}^{d}a_{j,k}}}\, \prod_{j=1}^{n}\prod_{k=1}^d(v_{j,k}^{\bnu})^{a_{j,k}}.
\end{equation}
Moreover, the polynomials $\{P_{\bnu}(\bx;\bka,N;\bal):\bnu\in\Nset_0^{d}\text{ and }|\bnu|=n\}$ form an orthogonal basis of the space $\cVdn{d}{n}{\sH_{\bka,N}}$ and satisfy the spectral equations
\begin{equation}\label{4.15}
\cG_i(\bal) P_{\bnu}(\bx;\bka,N;\bal)= \mu_i(\bnu) P_{\bnu}(\bx;\bka,N;\bal), \quad\text{ for }\quad i=0,\dots,d
\end{equation}
where $\cG_i(\bal) $ are the Gaudin operators in \eqref{4.7}, and 
\begin{equation}\label{4.16}
\mu_i(\bnu) = -(\ka_i+1)\left(\sum_{j=1}^{n}\frac{1}{\al_i-w_j^{\bnu}}\right).
\end{equation}
 \end{Theorem}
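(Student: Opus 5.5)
Three things need to be established: the explicit formula \eqref{4.14}, the spectral equations \eqref{4.15} with eigenvalues \eqref{4.16}, and the statement that the $P_{\bnu}$ form an orthogonal basis of $\cVdn{d}{n}{\sH_{\bka,N}}$.

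\emph{Explicit formula.} We compute $\fF(w_n)\cdots\fF(w_1)\cdot 1$ directly, with $\fF$ as in \eqref{4.10}, acting on functions of $(\bx,N)$. Write $\fF(w)=\sum_{k=0}^{d}\frac{1}{w-\al_k}\,y_kE_{y_k}^{-1}$, where $y_0=x_0=N-|\bx|$, $y_k=x_k$, and all $y_k$ are treated as independent. Then $y_kE_{y_k}^{-1}$ applied $m$ times to $1$ gives the falling factorial $y_k(y_k-1)\cdots(y_k-m+1)=(-1)^m(-y_k)_m$. Expanding the product, each factor $\fF(w_j)$ chooses an index $k(j)\in\{0,\dots,d\}$. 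Encode the choice $k(j)\ge1$ by $a_{j,k(j)}=1$ and the choice $k(j)=0$ by a zero row. This gives
\[
\fF(w_n)\cdots\fF(w_1)\cdot1=\sum_{A\in\cM_{n\times d}}(-1)^n\,(-x_0)_{n-|A|}\prod_{k=1}^d(-x_k)_{c_k}\prod_{j}\frac{1}{w_j-\al_{k(j)}},
\]
where $c_k$ is the $k$-th column sum and $|A|$ is the total sum.

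Now pass back to $x_0=N-|\bx|$ through the $E_N^{-1}$ shifts. Commuting, the $x_0$-part evaluated after removing $c=|A|$ units becomes $(N-|A|-|\bx|)$ with the appropriate falling factorial. We will need the identity that the resulting falling factorial in $N-|\bx|$, divided by $(-N)_n$, and combined with the factors $\prod_j(w_j-\al_0)/(w_j-\al_k)=\prod(1+v_{j,k})$, collapses to \eqref{4.14}. I expect to prove this via a Chu--Vandermonde-type resummation, which absorbs the $x_0$ terms and the $1+v$ factors row by row.

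An alternative that avoids the algebra is the following. Both sides are polynomials in $\bx$ of degree $\le n$ for generic $N$. We check that the right side of \eqref{4.14} is also produced by the recursion $P^{(m)}=\fF(w_m)P^{(m-1)}$, suitably normalized, by induction on $m$ (adding one row to $A$). I would verify this inductive step directly, and I expect it to be the main computational obstacle.

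\emph{Eigenvalues.} By \eqref{4.9}, $\cG_i(\bal)$ equals minus the operator \eqref{3.5} in this representation with $\la_i=\ka_i+1$. The representation satisfies \eqref{3.4} with vacuum $1$: $\fho_i\cdot1=x_i=\ff_i\cdot1$ and $(\fht_i+\fe_i)\cdot1=0$. Hence \eqref{3.18} yields $\cG_i P_{\bnu}=-(\ka_i+1)r(\al_i;\bw)P_{\bnu}$, which is \eqref{4.16}. One point must be checked: the elimination of $x_0$ through $E_N$ commutes with the operators $\cL_{i,j}$. This holds because $\cL_{i,j}$ in \eqref{4.4} with $x_0=N-|\bx|$ is exactly the restriction of the independent-variable operator to the hyperplane $x_0+\cdots+x_d=N$, and that restriction is what \eqref{4.10} implements.

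\emph{Basis and orthogonality.} Formula \eqref{4.14} shows $P_{\bnu}\in\Rset_n[\bx]$. The top-degree part is $\frac{(-1)^n}{(-N)_n}\prod_{j=1}^n\bigl(\sum_k v_{j,k}x_k\bigr)$ up to sign, which is nonzero, so $\deg P_{\bnu}=n$.

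The operator $\cL$ is a combination of the $\cG_i$ by \eqref{2.5}, and \eqref{3.25} gives $\cL P_{\bnu}=-n(n+|\bka|+d)P_{\bnu}$ (after sign bookkeeping). The eigenvalues $-m(m+|\bka|+d)$ for $m\le N$ are distinct, and \eqref{4.6} holds on each $\cVdn{d}{m}{\sH_{\bka,N}}$. Since $\cL$ is self-adjoint, $P_{\bnu}$ lies in $\cVdn{d}{n}{\sH_{\bka,N}}$.

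Within this space the $\cG_i$ are self-adjoint, since the $\cL_{i,j}$ are and the $\al$'s are real. By \thref{th3.2}, distinct $\bnu$ give distinct $\bmu(\bnu)$, so the $P_{\bnu}$ are mutually orthogonal. There are $\binom{n+d-1}{n}$ of them, matching \eqref{4.3}, so they form an orthogonal basis.
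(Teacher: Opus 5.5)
Your treatment of \eqref{4.15}--\eqref{4.16} and of the orthogonal-basis claim is correct and is the paper's own argument. It uses the vacuum relations for $1$, \eqref{4.9} and \eqref{3.18}, and then \eqref{2.5}, \eqref{3.25}, \eqref{4.6}, self-adjointness, \thref{th3.2} and \eqref{4.3}.

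The gap is the explicit formula \eqref{4.14}: you never derive it, since both of your routes stop at ``I expect''. Moreover, the one concrete claim you make about the $x_0$-part is wrong. The $x_0$-factor is not a falling factorial starting at $N-|A|-|\bx|$. Because $E_{x_j}^{-1}E_N^{-1}$ leaves $N-|\bx|$ unchanged, it is just your $(-x_0)_{n-|A|}$ with $x_0=N-|\bx|$ substituted, i.e.\ $(|\bx|-N)_{n-|A|}$. With a shift by $|A|$ the terms would not recombine into \eqref{4.14}.

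With this corrected, your first route closes as follows.
\begin{itemize}
\item Use $\frac{w_j-\al_0}{w_j-\al_k}=1+v_{j,k}$, expand $\prod(1+v_{j,k})$ over the nonzero entries of $A$, and regroup by the matrix $B\in\cM_{n\times d}$ of retained rows.
\item Let $B$ have column sums $c_1,\dots,c_d$, and let $|B|$ denote its number of nonzero entries, so $B$ has $m=n-|B|$ zero rows. The matrices $A$ containing $B$ correspond to assigning each zero row an index in $\{0,1,\dots,d\}$.
\item Write $(-x_k)_{c_k+e_k}=(-x_k)_{c_k}(-x_k+c_k)_{e_k}$ and apply the multinomial Chu--Vandermonde identity:
\[
\sum_{e_0+\cdots+e_d=m}\binom{m}{e_0,\dots,e_d}(|\bx|-N)_{e_0}\prod_{k=1}^{d}(-x_k+c_k)_{e_k}=(|B|-N)_{m}=\frac{(-N)_n}{(-N)_{|B|}}.
\]
This is exactly the coefficient required in \eqref{4.14}.
\end{itemize}

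One normalization point will surface when you write this out. The computation produces \eqref{4.14} with the prefactor $\prod_j(w_j-\al_0)/\bigl(N(N-1)\cdots(N-n+1)\bigr)$. Reading \eqref{4.12} literally with $(-N)_n$ gives an extra factor $(-1)^n$; for $n=1$ one gets the value $-1$ at $\bx=0$. So the former is the normalization the paper intends, consistent with $P_{\bnu}(0)=1$ and with the factor $\frac{w_{n+1}-\al_0}{N}$ in its inductive step.

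The paper itself follows your second (inductive) route. The device that makes its inductive step immediate is the splitting $\frac{w-\al_0}{N}\fF(w)=\fF_1+\fF_2(w)$, where
\[
\fF_1=E_N^{-1}-\frac1N\sum_s x_s(\Id-E_{x_s}^{-1})E_N^{-1},\qquad \fF_2(w)=\frac1N\sum_s\frac{\al_s-\al_0}{w-\al_s}\,x_sE_{x_s}^{-1}E_N^{-1}.
\]
By $a(\Id-E_{a}^{-1})(-a)_k=k(-a)_k$, the operator $\fF_1$ fixes each term $\prod_k(-x_k)_{c_k}/(-N)_{c_1+\cdots+c_d}$. By $aE_{a}^{-1}(-a)_k=-(-a)_{k+1}$, the operator $\fF_2(w_{n+1})$ appends a last row with a single $1$.

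Your resummation, once written out, is a closed-form, non-inductive alternative. It absorbs the $x_0$-terms in one step, whereas the paper's $\fF_1$ does the same absorption one row at a time.
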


\begin{proof}
First, we prove equation~\eqref{4.14} from \eqref{4.12} and \eqref{4.10} by induction on $n$. The fact that the entries of $\bw$ solve the Bethe ansatz equations (and their distribution depends on $\bnu$) is irrelevant for this part, so we will drop $\bnu$ to simplify the notation. 
Thus, assuming that \eqref{4.14} holds for some $\bnu$ such that $|\bnu|=n$, we want to show that the polynomial
\begin{align*}
P&=\frac{w_{n+1}-\al_{0}}{N}\, \fF(w_{n+1}) \cdot  P_{\bnu}(\bx;\bka,N;\bal)\\
&=\frac{w_{n+1}-\al_{0}}{N}\, \fF(w_{n+1}) \cdot \sum_{A=(a_{j,k})\in\cM_{n\times d}}\frac{\prod_{k=1}^{d}(-x_k)_{\sum_{j=1}^{n}a_{j,k}}}{(-N)_{\sum_{j=1}^{n}\sum_{k=1}^{d}a_{j,k}}}\, \prod_{j=1}^{n}\prod_{k=1}^dv_{j,k}^{a_{j,k}}
\end{align*}
can be written in the form \eqref{4.14} with $n$ replaced by $n+1$, i.e. 
\begin{equation}\label{4.17}
P=\sum_{B=(b_{j,k})\in\cM_{(n+1)\times d}}\frac{\prod_{k=1}^{d}(-x_k)_{\sum_{j=1}^{n+1}b_{j,k}}}{(-N)_{\sum_{j=1}^{n+1}\sum_{k=1}^{d}b_{j,k}}}\, \prod_{j=1}^{n+1}\prod_{k=1}^dv_{j,k}^{b_{j,k}}.
\end{equation}
Note that 
\begin{equation}\label{}
\frac{w_{n+1}-\al_{0}}{N}\, \fF(w_{n+1}) =\fF_1+ \fF_2(w_{n+1}),
\end{equation}
where
\begin{equation*}
\fF_1 = E_{N}^{-1} -\frac{1}{N}\sum_{s=1}^{d} x_s (\Id-E_{x_s}^{-1}) E_{N}^{-1} \quad\text{ and }\quad \fF_2(w_{n+1})=\frac{1}{N} \sum_{s=1}^{d}v_{n+1,s} x_{s} E_{x_s}^{-1} E_{N}^{-1}.
\end{equation*}
Using the identity $a(\Id-E_{a}^{-1})\cdot(-a)_{k}=k(-a)_k$ it follows that  
\begin{equation*}
\fF_1 P_{\bnu}(\bx;\bka,N;\bal) = P_{\bnu}(\bx;\bka,N;\bal)= \sum_{A=(a_{j,k})\in\cM_{n\times d}}\frac{\prod_{k=1}^{d}(-x_k)_{\sum_{j=1}^{n}a_{j,k}}}{(-N)_{\sum_{j=1}^{n}\sum_{k=1}^{d}a_{j,k}}}\, \prod_{j=1}^{n}\prod_{k=1}^dv_{j,k}^{a_{j,k}}
\end{equation*}
which gives all terms in \eqref{4.17} for which $B$ has zeros in the last row (and we can identify this subset of $\cM_{(n+1)\times d}$ with $\cM_{n\times d}$ by deleting the last row). Using the identity $aE_{a}^{-1}(-a)_{k}=-(-a)_{k+1}$ we see that 
 \begin{align*}
\fF_2(w_{n+1}) P_{\bnu}(\bx;\bka,N;\bal) &=\sum_{s=1}^{d}\sum_{A=(a_{j,k})\in\cM_{n\times d}}\frac{(-x_s)_{\sum_{j=1}^{n}a_{j,s}+1}\prod_{k\neq s}^{d}(-x_k)_{\sum_{j=1}^{n}a_{j,k}}}{(-N)_{\sum_{j=1}^{n}\sum_{k=1}^{d}a_{j,k}}+1}\\
&\qquad \times\, v_{n+1,s}\, \prod_{j=1}^{n}\prod_{k=1}^dv_{j,k}^{a_{j,k}}.
\end{align*}
The right-hand side of the last equation gives all terms in \eqref{4.17} for which $B$ has a nonzero entry in the last row. These matrices $B$ must have the form $B=A+E_{n+1,s}$ for some $s=1,\dots,d$, where $A$ has zeros in the last row (so we can identify $A$ with a matrix in $\cM_{n\times d}$) and $E_{n+1,s}$ is the $(n+1)\times d$ matrix whose $(n+1,s)$ entry is equal to $1$ and all other entries are zero. It is easy to see that if we rewrite the sums over $s$ and $A$ in the last equation as a sum over matrices $B=A+E_{n+1,s}$ in $\cM_{(n+1)\times d}$ with a nonzero entry in the last row we obtain the terms needed in \eqref{4.17}. This completes the proof of \eqref{4.14}. 

Using \eqref{2.5}, \thref{th3.2}, the identity \eqref{3.25} and \eqref{4.9}, we see that the polynomials $P_{\bnu}(\bx;\bka,N;\bal)$ satisfy \eqref{4.6} and therefore they belong to $\cVdn{d}{n}{\sH_{\bka,N}}$. 
Equation~\eqref{4.15} follows from \prref{3.1} and \eqref{3.18}. 
By \thref{th3.2}, if $\bnu\neq\bnu'$ there exists $i\in\{1,\dots,d\}$ such that $\mu_i(\bnu) \neq \mu_i(\bnu') $. Since the Gaudin operator $\fG_i(\bal)$ is self-adjoint with respect to the Hahn inner product \eqref{4.2}, it follows that the polynomials $P_{\bnu}(\bx;\bka,N;\bal)$ are mutually orthogonal, hence linearly independent. From \thref{th3.2} and \eqref{4.3} it follows that the number of polynomials is equal to the dimension of the space  $\cVdn{d}{n}{\sH_{\bka,N}}$ and therefore they form an orthogonal basis of $\cVdn{d}{n}{\sH_{\bka,N}}$.
\end{proof}

\begin{Example}[$n=2$]\label{ex4.5} 
If $E_{i,j}$ denotes the $2\times d$ matrix whose $(i,j)$ entry is equal to $1$ and all 
other entries are zero, then for $n=2$ we have 
$$\cM_{2\times d}=\{0\}\cup\{E_{i,k}:i=1,2, k=1,\dots,d \} \cup\{E_{1,k}+E_{2,j}:  j, k \in\{1,\dots,d\} \}. $$
Therefore,
\begin{align*}
P_{\bnu}(\bx;\bka,N;\bal)&=1+\sum_{k=1}^{d}\frac{v_{1,k}^{\bnu}+v_{2,k}^{\bnu}}{N}x_k\\
&\qquad+\sum_{1\leq j<k\leq d}\frac{v_{1,k}^{\bnu}v_{2,j}^{\bnu}+v_{1,j}^{\bnu}v_{2,k}^{\bnu}}{N(N-1)}x_jx_k
+\sum_{k=1}^{d}\frac{v_{1,k}^{\bnu}v_{2,k}^{\bnu}}{N(N-1)}x_k(x_k-1).
\end{align*}
\end{Example}

\begin{Remark}[Classical formula in dimension one]\label{re4.6}
When $d=1$ we have $\bx=x_1$ and $\bnu=n$. For fixed $n$, we omit the $n$ dependence to simplify the notation, and formula \eqref{4.14} reads
\begin{align}
P_{n}(x;\bka,N;\bal)&=1+\sum_{k=1}^{n}v_{k,1} \frac{x}{N}+\sum_{1\leq j<k\leq n }v_{j,1}v_{k,1} \frac{x(x-1)}{N(N-1)}\nonumber\\
&\qquad+\cdots + \left(\prod_{k=1}^{n}v_{k,1}\right) \frac{(-x)_n}{(-N)_n}\nonumber\\
&=1+\sum_{j=1}^{n}\tte_j(\bv) \frac{(-x)_j}{(-N)_j},\label{4.19}
\end{align}
where $\tte_j(\bv) $ are the elementary symmetric polynomials of $v_{1,1},v_{2,1},\dots,v_{n,1}$. Note that when $d=1$ we have $v_{j,1}=\frac{\al_1-\al_0}{w_j-\al_1}$ where $w_j$ are the roots of the polynomial solution of \eqref{3.24}. We have one independent  Gaudin operator $\fG_{1}(\bal)=-\fG_{0}(\bal)=\frac{\cL_{0,1}}{\al_0-\al_1}$ and we can normalize it by setting $\al_0=1$ and $\al_1=0$. Using \eqref{3.24}, \eqref{3.25} and \eqref{4.9} we see that $v_{j,1}=-1/w_j$ where $w_j$ are the roots of the polynomial solution of the Euler's hypergeometric differential equation 
$$w(w-1)p''(w)+((\ka_1+1)(w-1)+(\ka_0+1) w)p'(w)-n(n+\ka_0+\ka_1+1)p(w)=0,$$
which has $3$ regular singular points at $0$, $1$ and $\infty$. This simply means that $p_{n}(w)$ is the Jacobi polynomial of degree $n$ for which we have an explicit formula in terms of the Gauss' hypergeometric function:
\begin{align*}
p_{n}(w)&=\frac{(-1)^n(\ka_1+1)_n}{(n+\ka_0+\ka_1+1)_n}\,\tFo{-n}{n+\ka_0+\ka_1+1}{\ka_1+1}{w}\\
&=\frac{(-1)^n(\ka_1+1)_n}{(n+\ka_0+\ka_1+1)_n}\,\sum_{j=0}^{n} \frac{(-n)_j(n+\ka_0+\ka_1+1)_j}{j!(\ka_1+1)_j}\,w^j.
\end{align*} 
From the formula for $p_{n}(w)$ above and Vieta's formulas it follows that 
$$\tte_j(\bv) =\frac{(-n)_j(n+\ka_0+\ka_1+1)_j}{j!(\ka_1+1)_j}.$$
Substituting the last equation into \eqref{4.19} we find the usual hypergeometric representation for the Hahn polynomial
\begin{align*}
P_{n}(x;\bka,N;\bal)&= 1+\sum_{j=1}^{n}\frac{(-n)_j(n+\ka_0+\ka_1+1)_j(-x)_j}{j!(-N)_j(\ka_1+1)_j}\\
&=\tFt{-n}{n+\ka_0+\ka_1+1}{-x}{-N}{\ka_1+1}.
\end{align*}
\end{Remark}

\section{Dirichlet distribution}\label{se5}
We fix parameters $\bka=(\ka_0,\dots,\ka_{d})$ such that $\ka_j>-1$ for all $j\in\{0,\dots,d\}$. For $\by=(y_1,\dots,y_d)$, the Dirichlet distribution on the simplex 
$$\tT^{d}=\{\by\in\Rset^{d}:  y_i\geq 0\text{ and }|\by|\leq 1\}$$
is given by the density
\begin{equation}\label{5.1}
W_{\bka} (\by)= \frac{\Ga(|\bka|+d+1)}{\prod_{j=0}^{d}\Ga(\ka_j+1)}\, y_1^{\ka_1} \cdots y_d^{\ka_d} (1-|\by|)^{\ka_{0}}.
\end{equation}
On the space $\Rset[\by]$ of polynomials of $y_1,y_2,\dots,y_d$, define an inner product by
\begin{equation}\label{5.2}
\langle f, g\rangle_{W_{\bka}} =\int_{\tT^d}f(\by)g(\by) W_{\bka} (\by)\, d\by,
\end{equation}
and for $n\in\Nset_0$ let 
\begin{equation}\label{5.3}
\cVdn{d}{n}{W_{\bka}}=\Rset_{n}[\by]\ominus\Rset_{n-1}[\by],
\end{equation}
denote the space of orthogonal polynomials of degree $n$. For $i\neq j\in\{0,\dots,d\}$ consider the partial differential operators
\begin{equation}\label{5.4}
\cLD_{i,j} =  y_iy_j(\pd_{y_i}-\pd_{y_j})^2+\left((\ka_i+1)y_j-(\ka_j+1)y_i\right)(\pd_{y_i}-\pd_{y_j}),
\end{equation}
where $\pd_{y_i}=\frac{\pd}{\pd y_i}$, for $i=1,\dots, d$, $\pd_{y_{0}}=0$ and $y_{0}=1-|\by|$. It is straightforward to show that these operators satisfy the Kohno--Drinfeld relations \eqref{2.1} and preserve the spaces $\cVdn{d}{n}{W_{\bka}}$ of orthogonal polynomials with respect to the Dirichlet distribution, i.e.  $\cLD_{i,j}:\cVdn{d}{n}{W_{\bka}}\to\cVdn{d}{n}{W_{\bka}}$ for every $n\in\Nset_0$ and  $i\neq j\in\{0,\dots,d\}$. In fact, many algebraic properties of these operators can be deduced from the Hahn operators \eqref{4.4} via a limiting process as follows. If we set 
\begin{equation}\label{5.5}
x_i=Ny_i,\qquad \text{ for }\qquad i=1,\dots, d,
\end{equation}
then one can show that the operators in \eqref{4.4} and \eqref{5.4} are related by
\begin{equation*}
\lim_{N\to\infty}\cL_{i,j}=\cLD_{i,j},
\end{equation*}
see \cite[Section~5.3]{GI}. This implies that the corresponding Gaudin operators for the Dirichlet distribution can also be obtained by a limit from the ones in \eqref{4.7}:
\begin{equation}\label{5.6}
\cGD_i(\bal)=\sum_{\begin{subarray}{c}j=0\\ j\neq i \end{subarray}}^{d}\frac{\cLD_{i,j}}{\al_{i}-\al_{j}}=\lim_{N\to\infty} \cG_i(\bal), \qquad i=0,\dots, d.
\end{equation}
In particular, the limit of the operator in \eqref{4.8} takes the form 
\begin{equation}\label{5.7}
\cGD_0(\bal)=\sum_{j=1}^{d}\frac{1}{\al_{0}-\al_{j}}\left(y_j(1-|\by|)\pd_{y_j}^{2}+\left((\ka_{j}+1)(1-|\by|)-(\ka_{0}+1)y_{j}\right)\pd_{y_j} \right).
\end{equation}

\begin{Remark}\label{re5.1}
For the representation of the Kohno--Drinfeld Lie algebra in \eqref{5.4}, the operator in \eqref{2.4} can be written as follows
\begin{align}
&\cLD=\sum_{0\leq i< j\leq d} \cLD_{i,j}=\sum_{i=1}^{d}y_i(1-y_i)\pd_{y_i}^2-2\sum_{1\leq i<j\leq d}y_iy_j\pd_{y_i}\pd_{y_j}\nonumber\\
&\qquad +\sum_{i=1}^{d}\left(\ka_i+1-(|\bka|+d+1)y_i\right)\pd_{y_i}.\label{5.8}
\end{align}
By an appropriate gauge transformation, ignoring inessential constant terms and factors, we can transform the operator $\cLD$ above into the Hamiltonian for the generic quantum superintegrable system on the sphere
\begin{equation}
\cLD\to \mathcal{H}=\Delta_{\mathbb{S}^d}+\sum_{i=0}^{d}\frac{\frac{1}{4}-\ka_{i}^2}{z_i^2},
\end{equation}
where $y_i=z_i^2$ and $\Delta_{\mathbb{S}^d}$ is the Laplace--Beltrami operator on the sphere. This system has been extensively studied in the literature \cite{I3,KMP2,KMT,MPW} as an important example of a second-order superintegrable system, possessing $(2d-1)$ second-order algebraically independent symmetries. The spaces $\cVdn{d}{n}{W_{\bka}}$ provide irreducible representations of the (symmetry) algebra generated by the first integrals \cite{I4}.
\end{Remark}

In order to write explicit formulas for the common eigenfunctions of the Gaudin operators, we set
\begin{equation}\label{5.10}
\fFD(w) = 1-|\by|+(w-\al_0)\,\sum_{j=1}^{d} \frac{y_j}{w-\al_j}, 
\end{equation}
which is a limit of the operator $\frac{w-\al_0}{N}\fF(w)$ defined in \eqref{4.10} as $N\to\infty$.  
\begin{Definition}
For every weak composition $\bnu=(\nu_1,\dots,\nu_d)$ of $n$ into $d$ parts we define
\begin{equation}\label{5.11}
P^{D}_{\bnu}(\by;\bka;\bal)=\fFD(w_n^{\bnu}) \cdots \fFD(w_{1}^{\bnu}) \in\Rset[\by],
\end{equation}
where $\bw(\bnu)=(w_1^{\bnu},\dots,w_n^{\bnu})$ is a solution of the Bethe ansatz equations \eqref{4.11} which has $\nu_k$ components in the $k$-th interval $(\al_{\sigma(k-1)}, \al_{\sigma(k)})$, for every $k=1,\dots,d$, and $\fFD$ is the polynomial defined in \eqref{5.10}.
\end{Definition}
Applying the change of variables \eqref{5.5} and taking the limit $N\to\infty$ in \thref{th4.4} we obtain the following result for the Gaudin model associated with the Dirichlet distribution.
\begin{Theorem}\label{th5.3}
If $\bw(\bnu)=(w_1^{\bnu},\dots,w_n^{\bnu})$ is a solution of the Bethe ansatz equations \eqref{4.11} which has $\nu_k$ components in the $k$-th interval $(\al_{\sigma(k-1)}, \al_{\sigma(k)})$, for every $k=1,\dots,d$, and if we set 
$$v_{j,k}^{\bnu}=\frac{\al_k-\al_0}{w_j^{\bnu}-\al_k} \quad \text{ for }\quad j=1,\dots,n,\quad k=1,\dots,d,$$ 
then the polynomial in \eqref{5.11} can be written as
\begin{equation}\label{5.12}
P^{D}_{\bnu}(\by;\bka;\bal)=\sum_{A=(a_{j,k})\in\cM_{n\times d}} \left(\prod_{j=1}^{n}\ \prod_{k=1}^d(v_{j,k}^{\bnu})^{a_{j,k}}\right)\,  \prod_{k=1}^{d}y_k^{\sum_{j=1}^{n}a_{j,k}}  .
\end{equation}
Moreover, the polynomials $\{P^{D}_{\bnu}(\by;\bka;\bal):\bnu\in\Nset_0^{d}\text{ and }|\bnu|=n\}$ form an orthogonal basis of the space $\cVdn{d}{n}{W_{\bka}}$ and satisfy the spectral equations
\begin{equation*}
\cGD_i(\bal) P^{D}_{\bnu}(\by;\bka;\bal)= \mu_i(\bnu) P^{D}_{\bnu}(\by;\bka;\bal), \quad\text{ for }\quad i=0,\dots,d
\end{equation*}
where $\cGD_i(\bal) $ are the Gaudin operators in \eqref{5.6}, and 
\begin{equation*}
\mu_i(\bnu) = -(\ka_i+1)\left(\sum_{j=1}^{n}\frac{1}{\al_i-w_j^{\bnu}}\right).
\end{equation*}
 \end{Theorem}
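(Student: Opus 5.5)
The plan is to prove the statement in three parts: the explicit formula, the eigenvalue equations, and the basis property. For the explicit formula, since $\fFD(w)$ is a multiplication operator, \eqref{5.11} is a product of $n$ linear polynomials. Writing $1-|\by| = 1-\sum_k y_k$, I rewrite each factor as
\begin{equation*}
\fFD(w_j)=1+\sum_{k=1}^d\Big(\frac{w_j-\al_0}{w_j-\al_k}-1\Big)y_k=1+\sum_{k=1}^{d}v_{j,k}y_k .
\end{equation*}
Expanding the product $\prod_{j=1}^n(1+\sum_k v_{j,k}y_k)$ means choosing, from each factor $j$, either $1$ or exactly one term $v_{j,k}y_k$. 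These choices are in bijection with matrices in $\cM_{n\times d}$, and this gives \eqref{5.12} directly. Alternatively, \eqref{5.12} is the limit of \eqref{4.14} under \eqref{5.5}, since $(-x_k)_m/(-N)_m \to y_k^m$ when the total order is fixed. The direct expansion is cleaner, so I would use it.

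For the spectral equations, I pass to the limit in \thref{th4.4}. Fix $\bnu$; the Bethe roots $\bw(\bnu)$ do not depend on $N$. Set $Q_N(\by)=P_{\bnu}(N\by;\bka,N;\bal)$ for $N\ge n$. By \eqref{4.14}, $Q_N$ is a polynomial of degree at most $n$ in $\by$. Its coefficients are rational in $N$ and converge to those of $P^D_{\bnu}$. The operators $\cG_i(\bal)$, rewritten in the variables $\by$, act on polynomials of degree at most $n$ through matrices whose entries converge to those of $\cGD_i(\bal)$; this is the content of \eqref{5.6} and \cite[Section~5.3]{GI}, applied to the finite-dimensional space $\Rset_n[\by]$. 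The eigenvalue $\mu_i(\bnu)$ in \eqref{4.16} is independent of $N$. Letting $N\to\infty$ in \eqref{4.15} therefore yields $\cGD_i(\bal)P^D_{\bnu}=\mu_i(\bnu)P^D_{\bnu}$.

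The main obstacle is justifying this convergence rigorously, namely that the operator limit holds coefficientwise on $\Rset_n[\by]$ and not just formally. I would check it on monomials. With $x_i=Ny_i$, the shift $E_{x_i}$ becomes a shift by $1/N$ in $y_i$, and Taylor expansion of $\cL_{i,j}$ up to second order gives $\cLD_{i,j}$ plus $O(1/N)$ terms with polynomial coefficients. Alternatively, one can avoid limits altogether by repeating the Bethe ansatz argument of \prref{pr3.1}. Here one would use the realization $\ff_i=y_i$, $\fe_i=-(y_i\pd_{y_i}^2+(\ka_i+1)\pd_{y_i})$, $\fho_i=y_i\pd_{y_i}$, $\fht_i=y_i\pd_{y_i}+\ka_i+1$ in homogeneous variables $y_0,\dots,y_d$, and check that $\cLD_{i,j}$ is minus the corresponding Casimir-type expression. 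This realization could serve as a backup.

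For the basis property, I first show $P^D_{\bnu}\in\cVdn{d}{n}{W_{\bka}}$. Using \eqref{2.5}, $\cLD=\sum_i(\al_i-\al_0)\cGD_i(\bal)$, so $\cLD P^D_{\bnu}=\lambda P^D_{\bnu}$ with $\lambda=-n(n+|\bka|+d)$ by \eqref{3.25}. Since $P^D_{\bnu}$ has degree $n$ (its top homogeneous part is $\prod_j\sum_k v_{j,k}y_k\neq 0$) and $\cLD$ acts on degree-$m$ components with distinct eigenvalues $-m(m+|\bka|+d)$, the polynomial lies in $\cVdn{d}{n}{W_{\bka}}$. Next, the operators $\cGD_i(\bal)$ are symmetric with respect to \eqref{5.2}, because each $\cLD_{i,j}$ is (this is the limit of the self-adjointness of the $\cL_{i,j}$, or can be checked by integration by parts). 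By \thref{th3.2}, distinct $\bnu$ give distinct eigenvalue vectors $\bmu(\bnu)$, so the $P^D_{\bnu}$ are mutually orthogonal. Finally, counting $\binom{n+d-1}{n}$ of them matches the dimension of $\cVdn{d}{n}{W_{\bka}}$, so they form an orthogonal basis.
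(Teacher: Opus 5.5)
Your proposal is correct and follows essentially the paper's route: the paper proves the theorem in one line, by applying the change of variables \eqref{5.5} and letting $N\to\infty$ in \thref{th4.4}, and that is exactly how you obtain the spectral equations. The only differences are that you get \eqref{5.12} by directly expanding the product of the linear factors $1+\sum_k v_{j,k}^{\bnu}y_k$ rather than as the limit of \eqref{4.14}, and that you rerun the self-adjointness, simple-spectrum and dimension-count argument from the proof of \thref{th4.4} in the Dirichlet setting; both simply make explicit what the paper leaves implicit.
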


\section{Multinomial distribution}\label{se6}

For the Gaudin model associated with the multinomial distribution, the Bethe ansatz equations decouple and the constructions simplify significantly. We start by describing briefly the modifications needed in \seref{se3}, and then we explain the construction of the corresponding orthogonal polynomials. This provides an alternative approach to some of the results in \cite{I5}.

\subsection{Algebraic framework}\label{ss6.1}
We fix a Lie algebra $\mathfrak{g}$ with basis $\{\fe,\ff,\fho,\fht\}$ and brackets
\begin{subequations}\label{6.1}
\begin{align}
&[\fho,\ff]=\ff, && [\fht,\ff]=0, && [\ff,\fe]=\fht, \label{6.1a}\\
&[\fho,\fe]=-\fe, && [\fht,\fe]=0, && [\fho,\fht]=0. \label{6.1b}
\end{align}
\end{subequations}
In particular, $\fht$ belongs to the center of $\mathfrak{g}$ and $\ff \fe+\fho\fht$ belongs to the center of the universal enveloping algebra $U(\mathfrak{g})$.  
For $\la\in\Cset$, let $V_{\la}$ be a $\mathfrak{g}$-module generated by the vacuum vector 
$\Vac_{\la}$ 
\begin{equation}\label{6.2}
V_{\la}=U(\mathfrak{g})\cdot\Vac_{\la},
\end{equation}
and the relations
\begin{subequations}\label{6.3}
\begin{align}
&\fht\cdot \Vac_{\la}=\la\, \Vac_{\la}, \label{6.3a} \\
&(\fho-\ff)\cdot\Vac_{\la} =(\fht+\fe)\cdot\Vac_{\la} =0. \label{6.3b}
\end{align}
\end{subequations}

We use the same notation as in \seref{se3}: we fix a nonnegative integer $d$, set $\bla=(\la_0,\la_1,\dots,\la_d)$, and we denote by $V_{\bla}$ the tensor product 
$V_{\la_0}\otimes V_{\la_1}\otimes\cdots \otimes V_{\la_d}$. The Gaudin operators are defined by
\begin{equation}\label{6.4}
\fG_i(\bal)=\sum_{\begin{subarray}{c}j=0\\ j\neq i \end{subarray}}^{d}\frac{\fe_{i}\ff_{j}+\ff_{i}\fe_{j}+\fho_{i}\fht_{j} +\fht_{i} \fho_{j}}{\al_{i}-\al_{j}}, \qquad \text{for }\quad i=0,1,\dots,d,
\end{equation}
and 
\begin{equation*}
\cS(u)=\sum_{i=0}^{d}\frac{\fG_i(\bal)}{u-\al_i}=\sum_{0\leq i<j\leq d} \frac{\fe_{i}\ff_{j}+\ff_{i}\fe_{j}+\fho_{i}\fht_{j} +\fht_{i} \fho_{j}}{(u-\al_i)(u-\al_j)}.
\end{equation*}
Working with generating functions for the Lie algebra $\mathfrak{g}$ with relations in \eqref{6.1} it is easy to check that \eqref{3.9} is replaced by 
\begin{equation}\label{6.5}
[\cS(u),\fF(v)]=\frac{1}{u-v} (\fF(v)\fHt(u)-\fF(u)\fHt(v)).
\end{equation}
Setting $\bw=(w_1,\dots,w_n)$ and using the functions $r(u; \bw)$ and $\fr(u;\bla,\bal)$ in \eqref{3.13} one can show that \eqref{3.14a} in \seref{se3} is replaced by 

\begin{align}
&\cS(u)\cdot(\fF(w_n) \cdots \fF(w_{1})\cdot \Vac_{\bla}) = r(u; \bw) \fr(u;\bla,\bal) \,\fF(w_n) \cdots \fF(w_{1})\cdot \Vac_{\bla} \nonumber\\
&\qquad\qquad +\sum_{j=1}^{n}\frac{\fr(w_j;\bla,\bal)}{w_j-u}   \fF(w_{n})\cdots  \underbrace{\fF(u)}_{j} \cdots \fF(w_{1})\cdot \Vac_{\bla}. \label{6.6}
\end{align}
Note that the Bethe ansatz equations 
\begin{equation}\label{6.7}
\fr(w_j;\bla,\bal)=\sum_{k=0}^{d}\frac{\la_k}{w_j-\al_k}=0 ,\qquad \text{ for }\quad j=1,\dots,n,
\end{equation}
are independent of $n$, and reduce simply to the problem of finding the roots of the polynomial 
\begin{equation}\label{6.8}
p(w)=\prod_{k=0}^{d}(w-\al_k)\left(\sum_{k=0}^{d}\frac{\la_k}{w-\al_k}\right).
\end{equation}
Thus, solving the equation $p(w)=0$ (just once!) we obtain formulas for all Gaudin eigenvectors. The simplicity of the joint spectrum can also be deduced directly, without using the Heine--Stieltjes theory. 
The proposition below summarizes the above computations and represents an analog of \prref{pr3.1} and \thref{th3.2}.

\begin{Proposition}\label{pr6.1}
Suppose that $\la_{0},\dots,\la_{d}$ are positive real numbers and $\al_0,\dots,\al_d$ are distinct real numbers.
\begin{enumerate}[(i)]
\item Then the polynomial $p(w)$ in  \eqref{6.8} has $d$ distinct real roots $w_{1},\dots,w_{d}$. 
\item For $\bnu=(\nu_1,\dots,\nu_d)\in\Nset_0^d$ let
\begin{equation}\label{6.9}
\Vac_{\bla}(\bnu)=\fF(w_1)^{\nu_1}\cdots  \fF(w_{d})^{\nu_{d}}\cdot \Vac_{\bla}.
\end{equation}
Then $\Vac_{\bla}(\bnu)$ is  a common eigenvector of the Gaudin operators \eqref{6.4} and satisfies the spectral equations 
\begin{equation}\label{6.10}
\fG_i(\bal) \cdot  \Vac_{\bla}(\bnu) = \mu_i (\bnu) \Vac_{\bla}(\bnu), \quad\text{ for }\quad i=0,\dots,d
\end{equation}
where 
\begin{equation}\label{6.11}
\mu_i (\bnu)= \la_i \left(\sum_{j=1}^{d}\frac{\nu_{j}}{\al_{i}-w_{j}}\right).
\end{equation}
\item If $\bnu\neq\bnu'$ there exists $i\in\{1,\dots, d\}$ such that $\mu_i (\bnu)\neq \mu_i (\bnu')$.
\end{enumerate}
\end{Proposition}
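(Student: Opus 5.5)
For part (i) I will use an interlacing argument. Since $\la_k>0$, the rational function $\fr(w;\bla,\bal)=\sum_{k=0}^{d}\frac{\la_k}{w-\al_k}$ has derivative $-\sum_k \la_k/(w-\al_k)^2<0$, so it is strictly decreasing on each interval between consecutive poles. On each of the $d$ open intervals $(\al_{\sigma(k-1)},\al_{\sigma(k)})$ it tends to $+\infty$ at the left endpoint and to $-\infty$ at the right endpoint. Hence it has exactly one zero in each interval. The polynomial $p(w)$ in \eqref{6.8} equals $A(w)\fr(w;\bla,\bal)$ and has degree $d$, with leading coefficient $|\bla|>0$. Since $p$ does not vanish at the $\al_k$ (its value there is $\la_k\prod_{j\neq k}(\al_k-\al_j)\neq 0$), these $d$ zeros are exactly all the roots of $p$, and they are real and distinct. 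I label them $w_k\in(\al_{\sigma(k-1)},\al_{\sigma(k)})$.

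For part (ii) I first establish \eqref{6.6} by induction on $n$. The inputs are \eqref{6.5}, the fact that $\fHt(u)$ is central so that $\fHt(u)\cdot\Vac_{\bla}=\fr(u;\bla,\bal)\Vac_{\bla}$ commutes past all $\fF$'s, the relation $\cS(u)\cdot\Vac_{\bla}=0$ (which follows from \eqref{6.3b}, exactly as in \seref{se3}), and $[\fF(u),\fF(v)]=0$. The induction step runs as follows:
\[
\cS(u)\fF(w_{n+1})X=\fF(w_{n+1})\cS(u)X+\tfrac{1}{u-w_{n+1}}\bigl(\fF(w_{n+1})\fHt(u)-\fF(u)\fHt(w_{n+1})\bigr)X.
\]
Since $\fHt$ acts on $X$ by the scalar $\fr(\cdot;\bla,\bal)$, the new term contributes $\frac{\fr(u)}{u-w_{n+1}}$ to the diagonal coefficient, giving $r(u;\bw)\fr(u)$ in total. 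It also contributes the unwanted term with coefficient $\fr(w_{n+1})/(w_{n+1}-u)$. This is routine.

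Now take $\bw$ to consist of $\nu_k$ copies of $w_k$ for each $k$, with $n=|\bnu|$. Each $w_j$ is a root of $p$ and not a pole, so $\fr(w_j)=0$ and all unwanted terms vanish. Repeated roots cause no problem, because nothing in \eqref{6.6} requires distinctness, unlike the $2/(w_j-w_k)$ terms of \eqref{3.14c}. Thus $\cS(u)\Vac_{\bla}(\bnu)=r(u;\bw)\fr(u)\Vac_{\bla}(\bnu)$. Taking the residue at $u=\al_i$, and using that $r$ is regular there, gives $\fG_i(\bal)$ eigenvalue $\la_i r(\al_i;\bw)=\la_i\sum_j \nu_j/(\al_i-w_j)$, which is \eqref{6.11}. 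The factors commute, so the ordering in \eqref{6.9} is irrelevant.

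For part (iii), suppose $\mu_i(\bnu)=\mu_i(\bnu')$ for all $i=1,\dots,d$. This gives $\sum_j (\nu_j-\nu'_j)/(\al_i-w_j)=0$ for $i=1,\dots,d$ (dividing by $\la_i>0$). This is a linear system in $c_j=\nu_j-\nu'_j$ with Cauchy matrix $\bigl(1/(\al_i-w_j)\bigr)_{i,j=1}^{d}$. The $\al_1,\dots,\al_d$ are distinct, the $w_j$ are distinct, and the two sets are disjoint by part (i). Hence the Cauchy determinant $\prod_{i<k}(\al_i-\al_k)(w_k-w_j)/\prod_{i,j}(\al_i-w_j)$ is nonzero, so $c=0$ and $\bnu=\bnu'$. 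The only step needing care is the bookkeeping in the induction for (ii), in particular confirming that $\cS(u)$ annihilates the vacuum and that $\fHt$ is central in this contracted algebra. I expect no genuine obstacle.
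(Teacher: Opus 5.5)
Your proposal is correct and follows the same route as the paper: an intermediate-value argument on each interval $(\al_{\sigma(k-1)},\al_{\sigma(k)})$ for (i), and the identity \eqref{6.6} followed by residues at $u=\al_i$ for (ii), where you also derive \eqref{6.6} by induction from \eqref{6.5} while the paper only asserts it. In (iii) you use the nonvanishing of the Cauchy determinant $\det\bigl(1/(\al_i-w_j)\bigr)$ in place of the paper's observation that $R(w)=\sum_{j}(\nu_j-\nu_j')/(w-w_j)$ would have more zeros than poles; these express the same fact (and your numerator should read $\prod_{i<k}(\al_i-\al_k)(w_k-w_i)$).
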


\begin{proof}
If we arrange $\al_{j}$ in increasing order, i.e. if $\sigma\in S_{d+1}$ is a permutation such that $\al_{\sigma(0)}<\al_{\sigma(1)}<\cdots <\al_{\sigma(d)}$ then it is easy to see that on every subinterval $(\al_{\sigma(k-1)},\al_{\sigma(k)})$ the polynomial $p(w)$ has opposite signs at the end points, and therefore it will vanish inside the interval. This shows that $p(w)$ has exactly one real root in every subinterval $(\al_{\sigma(k-1)},\al_{\sigma(k)})$ for $k=1,\dots,d$ proving (i).
Equation \eqref{6.10} follows from \eqref{6.6} by computing the residue at $u=\al_{i}$. 
Finally, since $\la_{i}$ are  nonzero, if we assume that  $\mu_i (\bnu)= \mu_i (\bnu')$ for some $\bnu\neq\bnu'$ and all $i =1,\dots,d$,  it will follow that the rational function 
$R(w)=\sum_{j=1}^{d}\frac{\nu_{j}-\nu_{j}'}{w-w_j}$
has at most $d$ poles at the points $w_j$ and at least $d+2$ zeros at $\al_i$ and $\infty$, leading to a contradiction.
\end{proof}

\subsection{Multinomial inner product, polynomials and operators}\label{ss6.2}
We fix $N\in\Nset$ and parameters $\bp=(p_0,\dots,p_{d})$ such that $p_j>0$ for $j\in\{0,\dots,d\}$ and $|\bp|=p_0+\cdots+p_d=1$. 
The probability mass function of the multinomial distribution is given by
\begin{equation}\label{6.12}
\sK_{\bp,N} (\bx)  = \binom{N}{x_0,x_1,\dots,x_d} p_0^{x_0}p_1^{x_1}\cdots p_{d}^{x_{d}} =\frac{N!}{x_0!x_1!\cdots x_{d}!}\, p_0^{x_0}p_1^{x_1}\cdots p_{d}^{x_{d}},
\end{equation}
where $\bx=(x_1,\dots,x_d)\in V_N^d$ and $x_0=N-|\bx|$. The corresponding inner product on $\Rset_{N}[\bx]$ is defined by 
\begin{equation}\label{6.13}
\langle f,g\rangle_{\sK_{\bp,N} }=\sum_{\bx\in V_N^d}f(\bx)g(\bx)\sK_{\bp,N} (\bx),
\end{equation}
and for $n\in\{0,1,\dots,N\}$ we denote by $\cVdn{d}{n}{\sK_{\bp,N}}$ the corresponding space of orthogonal polynomials of degree $n$ 
\begin{equation}\label{6.14}
\cVdn{d}{n}{\sK_{\bp,N}}=\Rset_{n}[\bx]\ominus\Rset_{n-1}[\bx].
\end{equation}
Using the shift operators $E_{x_i}^{\pm 1}$ in Subsection~\ref{ss4.2}, for $i\neq j\in\{0,\dots,d\}$ we define the operators
\begin{equation}\label{6.15}
\cLK_{i,j} =  p_i x_j (E_{x_i}E_{x_j}^{-1}-\Id)+ p_j x_i (E_{x_j}E_{x_i}^{-1}-\Id),
\end{equation}
with the convention that $E_{x_{0}}=\Id$ and $x_{0}=N-|\bx|$. For every $n=0,1,\dots,N$, the operators in \eqref{6.15} define a representation of the Kohno--Drinfeld Lie algebra $\mathfrak{D}_{d+1}$  on the space $\cVdn{d}{n}{\sK_{\bp,N}}$. We can connect these operators to the Lie algebra in the previous subsection as follows. 
For $i\in \{0,\dots, d\}$, the operators 
\begin{align*}
&\ff_{i}=x_{i}E_{x_i}^{-1}, && \fe_{i}=-p_{i}E_{x_i},\\
&\fho_{i}=x_{i}, && \fht_{i}=p_{i},
\end{align*}
satisfy \eqref{6.1} and therefore define a representation of the Lie algebra $\mathfrak{g}$ on the space of polynomials $\Rset[x_i]$, satisfying \eqref{6.3} with $\la_{i}=p_{i}>0$ and vacuum vector $\Vac_{\la_i}=1$. With these notations, the operators in \eqref{6.15} can be written as
\begin{equation}\label{6.16}
\cLK_{i,j} =-(\fe_{i}\ff_{j}+\ff_{i}\fe_{j}+\fho_{i}\fht_{j} +\fht_{i}\fho_{j} ),
\end{equation}
and therefore for distinct real numbers $\al_0,\dots,\al_{d}$ the Gaudin operators for the multinomial distribution 
\begin{equation}\label{6.17}
\cGK_i(\bal)=\sum_{\begin{subarray}{c}j=0\\ j\neq i \end{subarray}}^{d}\frac{\cLK_{i,j}}{\al_{i}-\al_{j}}, \qquad i=0,\dots, d.
\end{equation}
differ by an overall sign from the ones in \eqref{6.4}. 
Using \prref{pr6.1} and the roots of the polynomial $p(w)$ in \eqref{6.8}, we can write an explicit orthogonal basis of the entire space $\Rset_{N}[\bx]$ consisting of eigenfunctions of the Gaudin operators in terms Aomoto--Gelfand hypergeometric series. Recall that ${\rM}_{d\times d}(N)$ denotes the set of all $d\times d$ matrices $A=(a_{i,j})$ with nonnegative integer entries $a_{i,j}$, such that $\sum_{i,j=1}^{d}a_{i,j}\leq N$.

\begin{Theorem}\label{th6.2}
Let  $w_1,\dots,w_d$ denote the roots of the equation
\begin{equation}
\sum_{k=0}^{d}\frac{p_k}{w-\al_k}=0,
\end{equation}
which are simple and real, and let
$$\omega_{i,j}=\frac{\al_0-\al_i}{w_j-\al_i} \quad \text{ for }\quad i,j \in\{1,\dots,d\}.$$ 
For $\bnu\in V_N^d$, define
\begin{equation}\label{6.19}
P^{K}_{\bnu}(\bx;\bp,N;\bal)=\sum_{A=(a_{i,j})\in {\rM}_{d\times d}(N)}\frac{\prod_{j=1}^{d}(-\nu_j)_{\sum_{i=1}^{d}a_{i,j}}\, \prod_{i=1}^{d}(-x_i)_{\sum_{j=1}^{d}a_{i,j}}}{(-N)_{\sum_{i,j=1}^{d}a_{i,j}}}
\prod_{i,j=1}^{d}\frac{\omega_{i,j}^{a_{i,j}}}{a_{i,j}!}.
\end{equation}
The polynomials $\{P^{K}_{\bnu}(\bx;\bp,N;\bal):\bnu\in V_N^d\}$ form an orthogonal basis of the space $\Rset_{N}[\bx]$ with respect to the inner product \eqref{6.13} and satisfy the spectral equations
\begin{equation}\label{6.20}
\cGK_i(\bal) P^{K}_{\bnu}(\bx;\bp,N;\bal)= \mu_i(\bnu) P^{K}_{\bnu}(\bx;\bp,N;\bal), \quad\text{ for }\quad i=0,\dots,d,
\end{equation}
where $\cGK_i(\bal) $ are the Gaudin operators in \eqref{6.17}, and 
\begin{equation}\label{6.21}
\mu_i(\bnu) = - p_i \left(\sum_{j=1}^{d}\frac{\nu_{j}}{\al_{i}-w_{j}}\right).
\end{equation}
 \end{Theorem}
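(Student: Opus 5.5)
The plan is to mimic the proof of \thref{th4.4}, with \prref{pr6.1} in place of \prref{pr3.1} and \thref{th3.2}. We realize the Lie algebra \eqref{6.1} on polynomials via $\ff_i=x_iE_{x_i}^{-1}$, $\fe_i=-p_iE_{x_i}$, $\fho_i=x_i$, $\fht_i=p_i$. Treating the polynomials also as functions of $N$ lets us eliminate $x_0$, exactly as in \eqref{4.10}:
\begin{equation*}
\fF(w)=\frac{1}{w-\al_0}(N-|\bx|)E_N^{-1}+\sum_{i=1}^{d}\frac{1}{w-\al_i}x_iE_{x_i}^{-1}E_N^{-1}.
\end{equation*}
By \eqref{6.16}, $\cGK_i(\bal)=-\fG_i(\bal)$. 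Therefore, by \prref{pr6.1}(ii), the vector $\fF(w_1)^{\nu_1}\cdots\fF(w_d)^{\nu_d}\cdot 1$ is a common eigenfunction of the operators $\cGK_i(\bal)$, with eigenvalues $-\mu_i$ as in \eqref{6.21}. The normalized version is
\begin{equation*}
\prod_{j}(w_j-\al_0)^{\nu_j}\,\frac{1}{(-N)_{|\bnu|}}\,\fF(w_1)^{\nu_1}\cdots\fF(w_d)^{\nu_d}\cdot 1 .
\end{equation*}
The main task is to show that this equals \eqref{6.19}.

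For the explicit formula, we first expand using the argument already carried out in the proof of \thref{th4.4}. That computation never used the Bethe equations. It gives the normalized product, for an arbitrary list $\bw$ of length $n=|\bnu|$, as the sum \eqref{4.14} over $\cM_{n\times d}$. Here we take the list in which each root $w_j$ is repeated $\nu_j$ times, so that $v_{\cdot,k}=(\al_k-\al_0)/(w_j-\al_k)=-\omega_{k,j}$. Each summand depends only on the column sums of $A$ and on the product of the $v$'s. We group the rows of $A$ into $d$ blocks, where block $j$ consists of the $\nu_j$ rows carrying $w_j$. We then set $a_{i,j}$ (in the notation of \eqref{6.19}) equal to the number of rows in block $j$ whose nonzero entry lies in column $i$. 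The number of matrices in $\cM_{n\times d}$ with given counts is $\prod_j \nu_j!/\bigl((\nu_j-\sum_i a_{i,j})!\prod_i a_{i,j}!\bigr)$. This equals $\prod_j(-1)^{\sum_i a_{i,j}}(-\nu_j)_{\sum_i a_{i,j}}/\prod_{i,j}a_{i,j}!$.

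The column sums of $A$ become $\sum_j a_{i,j}$, and the product of the $v$'s becomes $\prod(-\omega_{i,j})^{a_{i,j}}$. The two sign factors $(-1)^{\sum a_{i,j}}$ cancel, and we obtain \eqref{6.19}. The constraint $\sum a_{i,j}\le N$ is harmless, since $(-\nu_j)$ truncates at $|\bnu|\le N$. This bookkeeping is the step I expect to need the most care, especially the signs and the normalizing factor $\prod(w_j-\al_0)^{\nu_j}$. If it becomes messy, I would instead induct on $|\bnu|$, applying $\frac{w_j-\al_0}{N}\fF(w_j)=\fF_1+\fF_2(w_j)$ as in \thref{th4.4} and checking that the sum over $\rM_{d\times d}$ reproduces itself using $(-\nu_j-1)_{k}$-type Pochhammer recursions.

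Next, $P^K_{\bnu}$ has total degree $|\bnu|$ in $\bx$; this is visible from \eqref{6.19}, whose top-degree part is nonzero. For $\bnu\ne\bnu'$, \prref{pr6.1}(iii) gives an $i$ with $\mu_i(\bnu)\neq\mu_i(\bnu')$. The operators $\cLK_{i,j}$ are self-adjoint for the multinomial inner product \eqref{6.13}, so $\cGK_i(\bal)$ is self-adjoint for real $\al$, and hence $P^K_{\bnu}\perp P^K_{\bnu'}$. If self-adjointness is not already available from \cite{IX3}, it follows from a direct summation-by-parts check of detailed balance, $p_ix_j\sK(\bx)=p_j(x_i+1)\sK(\bx+\be_i-\be_j)$. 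There are $|V_N^d|=\dim\Rset_N[\bx]$ mutually orthogonal nonzero polynomials (each takes the value $1$ at $\bx=0$), so they form an orthogonal basis of $\Rset_N[\bx]$. Restricting to $|\bnu|=n$ also shows that they span $\cVdn{d}{n}{\sK_{\bp,N}}$.
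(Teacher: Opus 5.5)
Your proof is correct. The eigenvector and orthogonality/basis parts follow the paper. For the one nontrivial step, the Aomoto--Gelfand form \eqref{6.19}, you take a different route.

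The paper either appeals to \cite{I5} together with uniqueness from the simple joint spectrum, or inducts on $|\bnu|$. That induction uses $\fFK_1P^K_{\bnu}=P^K_{\bnu}$ and the new relation \eqref{6.26}, $P^K_{\bnu+\be_k}-P^K_{\bnu}=\fFK_2(w_k)P^K_{\bnu}$, verified with Pochhammer identities and the reindexing $A=B+E_{s,k}$.

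You instead note that $\fF(w)$ is literally the same operator \eqref{4.10} in the Hahn and multinomial realizations, since both use $\ff_i=x_iE_{x_i}^{-1}$. So the expansion \eqref{4.14}, which does not use the Bethe equations, applies to the list in which $w_j$ is repeated $\nu_j$ times. You then collapse the sum over $\cM_{n\times d}$ into a sum over $d\times d$ matrices by a multinomial count. The identity $\nu_j!/(\nu_j-m)!=(-1)^m(-\nu_j)_m$ produces both $(-\nu_j)_{\sum_i a_{i,j}}$ and $1/\prod a_{i,j}!$, and its sign cancels against $v=-\omega$.

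Your route is shorter and avoids both \cite{I5} and the uniqueness argument. It also shows that \eqref{6.19} is simply the ``multiset'' specialization of \eqref{4.14}, and it works equally well for generic $N$ as in \reref{re6.5}. What the paper's route gives in exchange is the explicit relation \eqref{6.26} in the degree index, which is of independent interest.

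One small correction. With the constant $\prod_j(w_j-\al_0)^{\nu_j}/(-N)_{|\bnu|}$, your normalized vector is $(-1)^{|\bnu|}P^K_{\bnu}$, not $P^K_{\bnu}$. To see this, evaluate at $\bx=0$, where $\fF(w)$ acts as multiplication by $N/(w-\al_0)$ followed by $N\mapsto N-1$.

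The right constant is $\prod_j(w_j-\al_0)^{\nu_j}/\bigl(N(N-1)\cdots(N-|\bnu|+1)\bigr)$. Equivalently, iterate $\fFK(w)=\frac{w-\al_0}{N}\fF(w)$ as in \eqref{6.23}, which is exactly what the induction in the proof of \thref{th4.4} computes. The same slip already appears in \eqref{4.12}, and it does not affect the spectral equations or orthogonality.
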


\begin{proof}
Similarly to the arguments in \seref{se4} we see that, if we define 
\begin{equation}\label{}
\fFK(w) = \frac{1}{N}\,(N-|\bx|)E_{N}^{-1}+\frac{w-\al_0}{N}\sum_{j=1}^{d} \frac{1}{w-\al_j}\,x_{j}E_{x_j}^{-1}E_{N}^{-1},
\end{equation}
then by \prref{pr6.1} the polynomials 
\begin{equation}\label{6.23}
P^{K}_{\bnu}(\bx;\bp,N;\bal)= \fFK(w_1)^{\nu_1}\cdots  \fFK(w_{d})^{\nu_{d}}\cdot 1
\end{equation}
will satisfy \eqref{6.20} and therefore they form an orthogonal basis of the space $\Rset_{N}[\bx]$. The remaining nontrivial part is to establish the Aomoto--Gelfand hypergeometric series representation given in \eqref{6.19}. This can be deduced easily from Theorem 4.1 in \cite{I5} using the fact the spectral equations \eqref{6.20} combined with the normalization $P^{K}_{\bnu}(0;\bp,N;\bal)=1$ determine the polynomials $P^{K}_{\bnu}(\bx;\bp,N;\bal)$ uniquely. Note that in \cite{I5}, $\al_0=0$, $w_j$ corresponds to $-1/\beta_j$, and $\nu_j$ to $n_j$. Using this correspondence, the eigenvalue of the $i$-th Gaudin operator can be computed by taking the coefficient of $\xi_i$ in \cite[Equation~(4.3c)]{I5} which coincides with the eigenvalue in \eqref{6.21}, so the polynomials defined by \eqref{6.19} there coincide with the polynomials defined by \eqref{6.23} above. We also outline an inductive proof of formula \eqref{6.19} similar to the proof of \eqref{4.14} in \thref{th4.4}, which reveals a new relation -- see equation~\eqref{6.26} below. First, note that for $w=w_k$ we can rewrite the operator $\fFK(w_k)$ as
\begin{equation}\label{6.24}
\fFK(w_k) =\fFK_1+ \fFK_2(w_k),
\end{equation}
where
\begin{equation*}
\fFK_1 = E_{N}^{-1} -\frac{1}{N}\sum_{s=1}^{d} x_s (\Id-E_{x_s}^{-1}) E_{N}^{-1} \quad\text{ and }\quad \fFK_2(w_k)=-\frac{1}{N} \sum_{s=1}^{d}\omega_{s,k} x_{s} E_{x_s}^{-1} E_{N}^{-1}.
\end{equation*}
Using the identity $a(\Id-E_{a}^{-1})\cdot(-a)_{k}=k(-a)_k$ one can show that if  $|\bnu|\leq N-1$ then the polynomials in \eqref{6.19} satisfy 
\begin{equation}
\fFK_1 P^{K}_{\bnu}(\bx;\bp,N;\bal) = P^{K}_{\bnu}(\bx;\bp,N;\bal).
\end{equation}
Combining the last equation with \eqref{6.24} we can deduce \eqref{6.19} from \eqref{6.23} by induction on $|\bnu|$ if we can show that for $|\bnu|\leq N-1$ the polynomials in \eqref{6.19} satisfy the relation
\begin{equation}\label{6.26}
P^{K}_{\bnu+\be_k}(\bx;\bp,N;\bal) -P^{K}_{\bnu}(\bx;\bp,N;\bal)  = \fFK_2(w_k) P^{K}_{\bnu}(\bx;\bp,N;\bal).
\end{equation}
Using the identity $(-a-1)_{k}-(-a)_{k}=-k(-a)_{k-1}$, we can write the left-hand side of \eqref{6.26} as
\begin{align*}
\mathrm{LHS}&=-\sum_{s=1}^{d}
\sum_{A=(a_{i,j})\in {\rM}_{d\times d}(N)}a_{s,k}\frac{(-\nu_k)_{\sum_{i=1}^{d}a_{i,k}-1}\prod_{j\neq k}(-\nu_j)_{\sum_{i=1}^{d}a_{i,j}}}{(-N)_{\sum_{i,j=1}^{d}a_{i,j}}}\\
&\qquad\qquad \times \prod_{i=1}^{d}(-x_i)_{\sum_{j=1}^{d}a_{i,j}} \prod_{i,j=1}^{d}\frac{\omega_{i,j}^{a_{i,j}}}{a_{i,j}!} \\
\intertext{and if $a_{s,k}>0$ we can replace the sum over $A$ with a sum over $B$ where  $A=B+E_{s,k}$}
&=-\sum_{s=1}^{d}
\sum_{B=(b_{i,j})\in {\rM}_{d\times d}(N-1)}\frac{\prod_{j=1}^{d}(-\nu_j)_{\sum_{i=1}^{d}b_{i,j}}}{(-N)_{\sum_{i,j=1}^{d}b_{i,j}+1}}\\
&\qquad\qquad \times (-x_s)_{\sum_{j=1}^{d}b_{s,j}+1} \, \prod_{i\neq s}(-x_i)_{\sum_{j=1}^{d}b_{i,j}}\,\omega_{s,k} \prod_{i,j=1}^{d}\frac{\omega_{i,j}^{b_{i,j}}}{b_{i,j}!}.
\end{align*}
Using the identity $aE_{a}^{-1}(-a)_{k}=-(-a)_{k+1}$ it follows that the right-hand side of \eqref{6.26} is equal to the last equation above, completing the proof of \eqref{6.26}.
\end{proof}

\begin{Remark}[Classical formula in dimension one]\label{re6.3}
Note that, unlike the Hahn case discussed in \reref{re4.6}, formula \eqref{6.19} easily reduces to the known formula for the Krawtchouk polynomials when $d=1$. Indeed, we have $\bx=x_1$, $\bnu=n$ and a short computation shows that $\omega_{1,1}=\frac{1}{p_{1}}$, hence formula \eqref{6.19} yields
\begin{align*}
P^{K}_{n}(x;\bp,N;\bal)= \sum_{a=0}^{n}\frac{(-n)_a(-x)_a}{(-N)_{a}\,a!}\,\frac{1}{p_1^a}=\tFo{-n}{-x}{-N}{\frac{1}{p_1}}.
\end{align*}
\end{Remark}

\begin{Remark}[Bispectrality]
More general multivariate Krawtchouk polynomials with respect to the multinomial distribution, depending on $d(d-1)/2$ free parameters, were introduced by Griffiths \cite{Gr1} in terms of a generating function. Mizukawa and Tanaka \cite{MT} gave an explicit formula for these polynomials using the Aomoto-Gelfand hypergeometric series  \cite{AK,Gel}. In \cite{I1} it was shown that the multivariate Krawtchouk polynomials defined by Griffiths are {\em bispectral}, i.e. they are common eigenfunctions of two families of commuting partial difference operators: one acting on the variables $\bx$ and the other one on the degree indices $\bnu$. The two families of operators can be connected by a bispectral involution $\mathfrak{b}$ defined on the variety which parametrizes the multivariate Krawtchouk polynomials. The results in \cite{I5} show that the subset which parametrizes the multinomial Gaudin models is invariant under the action of the bispectral involution $\mathfrak{b}$, which means that the polynomials $P^{K}_{\bnu}(\bx;\bp,N;\bal)$ are also common eigenfunctions of the operators for another Gaudin model acting on the degree indices $\bnu$. More precisely,  we can define dual parameters $\tilde{\boldsymbol{p}}=(\tilde{p}_{0},\dots,\tilde{p}_{d})$ satisfying  $\tilde{p}_{j}>0$, $|\tilde{\boldsymbol{p}}|=1$ and distinct real numbers $\beta_0,\dots,\beta_{d}$ such that 
$$P^{K}_{\bnu}(\bx;\bp,N;\bal)=P^{K}_{\bx}(\bnu;\tilde{\boldsymbol{p}},N;\boldsymbol{\beta}) \quad\text{ for }\quad \bnu,\bx\in V_N^d.$$
Therefore, $P^{K}_{\bnu}(\bx;\bp,N;\bal)$ will be eigenfunctions of the Gaudin operators in \eqref{6.17} acting on $\bnu$ instead of $\bx$, if we replace $\bp$ and $\bal$ with $\tilde{\boldsymbol{p}}$ and $\boldsymbol{\beta}$, respectively.
\end{Remark}

\begin{Remark}[Gaudin model for the negative multinomial distribution]\label{re6.5}
In this section, we focused on the multinomial distribution but clearly, the formula for the polynomials and the difference equations will hold for generic real or complex numbers $p_j$ and  $N$ such that $|\bp|=1$. More precisely, for generic $N$, $p_j$ and $\al_j$ we can define polynomials by formula \eqref{6.19}, where the sum on the right-hand side is over all matrices with entries in $\Nset_0$ (note that the series is terminating because $\nu_j\in\Nset_0$). Thus, we obtain polynomials defined for all $\bnu\in\Nset_0^d$ and equations~\eqref{6.20} and \eqref{6.26} hold for all $\bnu\in\Nset_0^d$. In particular, if we pick positive real numbers $s,c_1,\dots,c_d$ such that $|\mathbf{c}|=c_1+\cdots+c_d<1$ and if we set formally $N= -s$ and $p_j = - c_j /(1-|\mathbf{c}|)$ for $j=1,\dots,d$, we obtain polynomials defined for all $\bnu\in\Nset_0^d$ which are mutually orthogonal with respect to the negative multinomial distribution
$$\sM_{\mathbf{c},s}(\bx)=(1-|\mathbf{c}|)^{s}\,(s)_{|\bx|}\prod_{j=1}^{d}\frac{c_j^{x_j}}{x_j!}.$$ 
This leads to a Gaudin subclass of the multivariate Meixner polynomials introduced by Griffiths \cite{Gr2} which have a similar hypergeometric representation and bispectral properties established in \cite{I2}.
\end{Remark}

\begin{Remark}\label{re6.6}
The multinomial weight and the corresponding operators can be obtained from the multivariate Hahn weight and operators by a limit on the parameters as follows. For given parameters $\bp$ and $N$ of the multinomial distribution we set 
\begin{equation}\label{6.27}
\ka_i= t p_i, \qquad \text{ for }\qquad i=0,1,\dots,d.
\end{equation}
Then
$$\lim_{t\to\infty} \sH_{\bka,N} (\bx) =\sK_{\bp,N} (\bx) \quad\text{ and }\quad\lim_{t\to\infty}\frac{1}{t}\cL_{i,j}=\cLK_{i,j}.$$
For the Gaudin operators we have
\begin{equation}
\cGK_i(\bal)=\sum_{\begin{subarray}{c}j=0\\ j\neq i \end{subarray}}^{d}\frac{\cLK_{i,j}}{\al_{i}-\al_{j}}=\lim_{t\to\infty} \frac{1}{t} \cG_i(\bal), \qquad i=0,\dots, d.
\end{equation}
In particular, the operator for $i=0$ is
\begin{equation}\label{}
\cGK_0(\bal)=\lim_{t\to\infty} \frac{1}{t} \cG_0(\bal)=\sum_{j=1}^{d}\frac{1}{\al_{0}-\al_{j}}\left(p_{j} (N-|\bx|) (E_{x_j}-\Id)+p_0 x_j (E_{x_j}^{-1}-\Id)\right).
\end{equation}
If we set $\al_j=p_0-p_{j}$ for $j=1,\dots, d$ the operator $\cGK_0(\bal)$ reduces to the one studied in \cite{Sasaki} where equation \eqref{6.20} for $i=0$ was analyzed in the context of birth and death processes. The operator $\cG_0(\bal)$ in \eqref{4.8} can be regarded as a natural generalization corresponding to the multivariate Hahn distribution, while the operator $\cGD_0(\bal)$ in \eqref{5.7} is a continuous analog for the Dirichlet distribution.
\end{Remark}

\begin{Remark}
It is an interesting and challenging problem to find similar connections between multivariate orthogonal polynomials and {$\mathfrak{gl}_{m|n}$}  Gaudin models \cite{HMVY,LM}. Multivariate super Krawtchouk polynomials, related to the representation theory of the general linear Lie superalgebra, were defined and studied in \cite{IZ}.
\end{Remark}

\section*{Acknowledgments} I am grateful to Evgeny Mukhin for insightful discussions and valuable suggestions that helped refine an earlier version of this manuscript.

\end{document}